\documentclass[conference]{IEEEtran}
\usepackage[switch]{lineno}
\usepackage{cite}
\usepackage{amsmath,amssymb,amsfonts}
\usepackage{algorithmic}
\usepackage{graphicx}
\usepackage{textcomp}
\usepackage{xcolor}
\def\BibTeX{{\rm B\kern-.05em{\sc i\kern-.025em b}\kern-.08em
    T\kern-.1667em\lower.7ex\hbox{E}\kern-.125emX}}

\usepackage[ruled, linesnumbered]{algorithm2e}
\usepackage{amsthm}
\usepackage{tabularx}
\usepackage{tikz}
\usepackage{subfigure}
\usetikzlibrary{decorations.pathreplacing}
\usetikzlibrary{patterns}

\newtheorem{theorem}{Theorem}
\newtheorem{lemma}[theorem]{Lemma}
\newtheorem{definition}{Definition}
\newtheorem{note}{Note}

\newcommand{\travel}[2]{\tau^{(#1)}_{#2}}
\newcommand{\travelmin}[2]{\underline{\tau}^{(#1)}_{#2}}
\newcommand{\travelmax}[2]{\overline{\tau}^{(#1)}_{#2}}
\newcommand{\dwell}[2]{\delta^{(#1)}_{#2}}
\newcommand{\dwellmin}[2]{\underline{\delta}^{(#1)}_{#2}}
\newcommand{\dwellmax}[2]{\overline{\delta}^{(#1)}_{#2}}
\newcommand{\dwellprime}[2]{\delta'^{(#1)}_{#2}}
\newcommand{\busstop}[1]{s^{(#1)}}
\newcommand{\holding}[2]{\phi^{(#1)}_{#2}}
\newcommand{\holdingprime}[2]{\phi'^{(#1)}_{#2}}
\newcommand{\holdingmin}[2]{\underline{\phi}^{(#1)}_{#2}}
\newcommand{\holdingmax}[2]{\overline{\phi}^{(#1)}_{#2}}
\newcommand{\arrival}[2]{a^{(#1)}_{#2}}
\newcommand{\arrivalprime}[2]{a'^{(#1)}_{#2}}
\newcommand{\arrivalmin}[2]{\underline{a}^{(#1)}_{#2}}

\newcommand{\arrivalmax}[2]{\overline{a}^{(#1)}_{#2}}
\newcommand{\arrivalmaxprime}[2]{\overline{a}'^{(#1)}_{#2}}
\newcommand{\estimatedarrival}[2]{\hat{a}^{(#1)}_{#2}}
\newcommand{\departure}[2]{d^{(#1)}_{#2}}
\newcommand{\departureprime}[2]{d'^{(#1)}_{#2}}
\newcommand{\departuremin}[2]{\underline{d}^{(#1)}_{#2}}
\newcommand{\departureminprime}[2]{\underline{d}'^{(#1)}_{#2}}
\newcommand{\departuremax}[2]{\overline{d}^{(#1)}_{#2}}

\newcommand{\release}[2]{r^{(#1)}_{#2}}
\newcommand{\carryin}[1]{\omega^{(#1)}}
\newcommand{\wcht}[1]{\overline{HT}^{(#1)}}
\newcommand{\bcht}[1]{\underline{HT}^{(#1)}}

\newcommand{\policy}[1]{\pi^{(#1)}}
\newcommand{\nullpolicy}{\mathtt{\pi NIL}}
\newcommand{\schedpolicy}{\mathtt{\pi SCH}}
\newcommand{\dynamicpolicy}{\mathtt{\pi DYN}}
\newcommand{\headway}[2]{h^{(#1)}_{#2}}

\begin{document}

\title{
Computing Headway Bounds under Worst-Case Bunching in Fixed-Line Transit Systems
}

\author{\IEEEauthorblockN{
Michael Yuhas$^{1,2}$, George Gunter$^{1}$, Jose Paolo Talusan$^{1}$, Aron Laszka$^{3}$, Dan Freudberg$^{4}$, Abhishek Dubey$^{1}$}
$^{1}$\textit{Institute for Software Integrated Systems}, \textit{Vanderbilt University}, Nashville, TN, USA\\
$^{2}$\textit{School of Science, Engineering and Technology}, \textit{St. Mary's University}, San Antonio, TX, USA\\
$^{3}$\textit{College of Information Sciences and Technology}, \textit{Pennsylvania State University}, University Park, PA, USA\\
$^{4}$\textit{WeGo Public Transit}, Nashville, TN, USA\\
}

\maketitle

\begin{abstract}
Vehicle bunching is a major problem for transit operators. 
When vehicles bunch together, the lead vehicle will service the majority of passenger demand, leaving the following vehicles to operate below capacity, wasting fuel and money. 
Furthermore, after the last vehicle in the bunch passes, the time before the next vehicle's arrival (headway) will be large. 
Transit operators can combat bunching by holding buses at stops along a route, trading riding time for even headway times. 
While prior work has focused on developing holding policies to minimize average case bunching, no work has focused on analyzing the longest and shortest possible headway times under a broad group of such policies. 
We assume that dwell times at stops and travel times between stops are bounded and develop a dynamic program that computes the maximum and minimum headway times for a single bus route with an arbitrary number of control points, vehicles, and holding policies. 
These bounds are tight in the sense that it is always possible to identify the specific sequence of events that lead to their occurrence. 
We use these bounds to investigate the effects of different holding policies, stop placement, and number of vehicles on route headways and worst-case bunching. 
Finally, we apply these analysis techniques to a real-world transit system in Nashville, TN and show their utility for transit planning. 
\end{abstract}


\section{Introduction}
\label{sec:intro}
In a fixed-line transit system, vehicles travel along predetermined routes and serve a set of predetermined stops, e.g., a bus line in an urban area.  
Fig.~\ref{fig:intro} provides a visualization of the stop order and a snapshot of the vehicle locations along such a route at a fixed point in time in an example system.  
While vehicles often depart from the terminal according to a fixed schedule, as they travel along their route, random variations in travel time and number of passengers lead to bunching: when many vehicles clump together and service stops in rapid succession.
A common pattern is when the lead vehicle encounters a large number of passengers boarding or alighting at a single stop: this delays the vehicle longer than expected, giving its follower a chance to catch up.  
Because less time has passed between two vehicles servicing the same stop, there are fewer passengers waiting to board at the subsequent stop and the follower spends less time waiting, causing the gap to progressively narrow~\cite{enayatollahi_modelling_2019}.  
This leads to vehicles clustering in \textit{bunches}, with the gap between bunches increasing beyond what is scheduled~\cite{pan_impact_2023}.  
This is bad from a transit operator’s perspective (the follower buses in a bunch are now operating below capacity) and from a rider’s perspective (if the rider arrives at a stop after the last vehicle in a bunch has passed, they must now wait longer to begin their journey).    

\begin{figure}
\centering
\resizebox{1\linewidth}{!}{
\begin{tikzpicture}
\draw[rounded corners=2pt] (-0.1,0.4) rectangle ++(0.2,0.7);
\draw[->] (0.1, 1) -- (1.9,1);
\draw (2, 1) circle (0.1);
\draw[->] (2.1, 1) -- (3.9,1);
\draw (4, 1) circle (0.1);
\draw[->] (4.1, 1) -- (5.9,1);
\draw (6, 1) circle (0.1);
\draw[->] (6.1, 1) -- (7.9,1);
\draw[rounded corners=2pt] (7.9,0.4) rectangle ++(0.2,0.7);

\node[anchor=south] at (8,1) {\includegraphics[width=20pt]{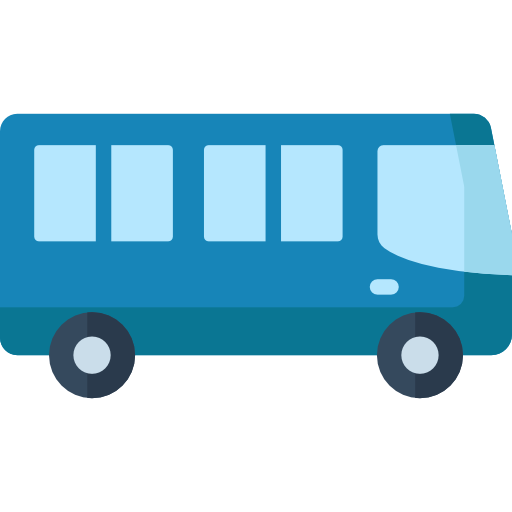}};
\node[anchor=south] at (7,1) {\includegraphics[width=20pt]{figures/bus.png}};
\node[anchor=south] at (6,1) {\includegraphics[width=20pt]{figures/bus.png}};
\node[anchor=south] at (0,1) {\includegraphics[width=20pt]{figures/bus.png}};
\node[anchor=north] at (4,0.5) {\scalebox{-1}[1]{\includegraphics[width=20pt]{figures/bus.png}}};
\node[anchor=north] at (5,0.5) {\scalebox{-1}[1]{\includegraphics[width=20pt]{figures/bus.png}}};

\draw[decorate,decoration={brace,amplitude=5pt,raise=2ex},color=red] (1,1) -- (5,1);
\node[anchor=south, align=center, color=red] (gap) at (3,1.4) {Increased Passenger \\ Waiting Times};
\draw[decorate,decoration={brace,amplitude=5pt,raise=4ex},color=red] (6,1.25) -- (8.5,1.25);
\node[anchor=south, color=red] (bunch) at (7.25, 2) {Bunched Vehicles};

\node[anchor=north] (terminal) at (0,0.5) {Terminal};
\node[anchor=north] (midpoint) at (8,0.5) {Midpoint};

\draw[<-] (5.1, 0.5) -- (7.9, 0.5);
\draw (5, 0.5) circle (0.1);
\draw[<-] (3.1, 0.5) -- (4.9, 0.5);
\draw (3, 0.5) circle (0.1);
\draw[<-] (0.1, 0.5) -- (2.9, 0.5);

\draw[rounded corners=4pt, color=green!40!gray] (4.5,-0.25) rectangle ++(1, 1);
\node[anchor=west, color=green!40!gray] (hold) at (5.5, -0.25) {Hold vehicle to prevent bunching};

\end{tikzpicture}
}
\caption{
As vehicles bunch, the headway for the vehicle behind the bunch increases.  Operators can hold vehicles at certain stops to regulate headway but cannot make vehicles speed up to close the gap.
}
\label{fig:intro}
\end{figure}

In fixed-line transit planning, the interval of time between vehicles at a stop is referred to as \textit{headway}~\cite{ntd_gloss}.  
In general, smaller headways are good for passengers, ensuring frequency of service, however, bunching leads to a scenario where some headway times are extremely small, leading to inefficiencies, and other headway times are extremely large, disgruntling riders.  
The goal then is, given a fixed number of vehicles serving a route, to even out headway times, such that a rider’s waiting time is bounded.
Transit operators can employ a variety of strategies to reduce bunching and manage headway.  
One strategy is to operate buses on a timetable at intermediate stops along the route, introducing delays for buses that arrive early~\cite{kim_development_2009}.  
With real-time vehicle location tracking, operators can also dynamically issue hold commands to vehicles getting too far ahead of their followers~\cite{pangilinan_bus_2008}.  
The hold times can be determined by a human operator, rules-based logic, or data driven model~\cite{rezazada_public_2022}. 
While holding a vehicle leads to an increase in in-vehicle travel time for passengers already aboard a bus, studies have shown that passengers perceive wait time at stops as worse than time spent traveling after boarding~\cite{he_travel_2019}.  
Furthermore, in areas with extreme weather (heat in the summer and cold in the winter), it can be dangerous for riders to wait for long periods at unsheltered stops.  
Thus, many transit agencies choose to pursue such policies to improve rider experience and improve operational efficiency.

Prior analysis of headway times has focused heavily on statistical models: travel times between stops and dwell times at each stop are drawn from distributions fit to real-world data with the objective of optimizing the average case~\cite{liu_improving_2018}.  
While this is important, human perception of a transit system is not necessarily shaped by the average case behavior~\cite{carrel_passengers_2013} and transit operators cannot make operational decisions based only on average case behavior.  
Thus, in addition to providing an analysis of the average case performance of various holding policies, it is necessary to provide a bound (both upper and lower) on headway times at each stop.  
While these times are potentially unbounded (e.g. due to vehicle breakdown), during normal operation it is safe to assume bounded dwell times at stops~\cite{dueker_determinants_2004} and travel times between stops~\cite{rahman_analysis_2018}.
Furthermore, if dwell or travel times from the long tail of these distributions occur, they are easy to detect and flag as violations.

\textbf{Contributions of this paper} -- 
We propose an analytical bound on the maximum and minimum headway times for a single transit route with an arbitrary number of stops, each governed by an arbitrary holding policy and serviced by an arbitrary number of vehicles.  
This work applies real-time system analysis to a cyber-physical system: automated headway control policies gather data through remote sensing and effect a change in a physical system by holding vehicles, which in turn affects the timing performance of system. 
To the best of our knowledge this is the first work of its kind: to place analytical bounds on headway times under broad classes of holding policies.
This work allows transit operators to evaluate their current holding policies and guide the selection of new proposals to ensure a safer and more enjoyable ride for all passengers as well as more economical operation.   
We use our bounds to provide an analysis of the tradeoffs involved in selecting a policy and include a case study using data from a real bus route in Nashville, TN. 

\section{Background}
The terminology and metrics used in this paper have been defined differently by different authors in slightly different transit contexts.  
\textit{Bunching} itself is seldom formally defined (definitions on how many and how far apart vehicles must be to constitute a bunch vary) but all authors agree that it can be measured indirectly through the unevenness of \textit{headway} times~\cite{byon_bunching_2018}.  
We use the Federal Transit Administration's definition of headway: ``the time interval between vehicles moving in the same direction on a particular route''~\cite{ntd_gloss}.  
Other authors use this working definition, but the expected headway at stops is the only location where headway directly impacts system performance~\cite{esfeh_waiting_2021}.  
In literature, \textit{tailway} is also used to describe the time between a lead vehicle and its follower~\cite{mohammed_influence_2021}, but this terminology is not as popular.

Many works focus on the problem of managing headway times.  
Infrastructure enhancements, such as traffic signal priority for buses can give vehicles falling behind schedule a chance to regulate headway~\cite{long_headway_2020}, however we do not consider them in this work as the required infrastructure is not readily available in every locality.  
Other works consider using a headway control tool to give commands to speed up or slow down to drivers at any point along route~\cite{martinez_understanding_2023}, but the commands may not always be carried out depending on driver skills and traffic conditions. 
The bulk of prior research on mitigating bunching focuses on holding buses at specific stops along a route called \textit{control points}~\cite{rezazada_public_2022}.  
We focus on analyzing these techniques since they are relatively easy for operators to implement without costly infrastructure upgrades and whether or not a driver holds is not subject to external factors like traffic.  
In general, these fall into two categories: schedule-based holding and headway-based holding.  
In schedule-based holding policies, operators focus on creating a schedule for buses that provides enough slack to mitigate headway variations, while keeping vehicle frequency high and travel times low~\cite{kim_development_2009}.  
The other approach is to use information about the headway between vehicles to dynamically adjust holding time based on real-world conditions.  
Daganzo provided a stochastic analysis of holding based only on headway between two vehicles on a fixed line transit route~\cite{daganzo_a_2009}.   
Cats et al. developed a simulator to track per stop waiting time distributions and considered two dynamic headway-based policies: always maintaining a minimum headway at a stop and holding so as to even headway times across the route~\cite{cats_evaluation_2010}.  
Zhou et al. considered holding based on the ratio of a vehicle’s headway to tailway and included a minimum and maximum holding time for operational convenience~\cite{zhou_a_2022}.  
Fonzone et al. extend these works by considering the case when travel and dwell times are dependent on non-uniform passenger arrival rates~\cite{fonzone_a_2015}.  
Recently, works have also considered using reinforcement learning to determine holding time based on headway~\cite{chen_real_2016}.  
In all these studies, however, the priority is always optimizing the average case behavior, not the headway bounds. 

\section{System Model}
\label{sec:model}
We now present a mathematical model of a fixed-line transit route.
While this work focuses on bus transit systems, the same model generalizes to other fixed-line transit modalities.
A table of notation is provided in Table~\ref{tab:notation}.
We use bold letters to represent a vector of values and italic text with subscripts to represent individual elements, e.g., $x_i$ is the $i$th value in $\mathbf{x}$.
For compactness, we use the notation $[\cdot]^{ub}_{lb}$ to represent a quantity clamped between ${ub}$ and ${lb}$, i.e., $[x]^{ub}_{lb}\triangleq\max\{lb,\min\{ub,x\}\}$.
We also indicate the upper bound of a variable with an overline and a lower bound with and underline (e.g. $\underline{x}\leq x\leq\overline{x}$).

\begin{table}
\caption{Table of notation.  For consistency, a superscript always denotes stop index and subscript always denotes vehicle index.}
\label{tab:notation}
\begin{tabularx}{\linewidth}{|p{1.6cm}|X|}
\hline
\textbf{Symbol} & \textbf{Definition} \\
\hline
$N$ & Number of stops in a route\\
$\busstop{i}$ & The $i$th stop along a route\\
$M$ & Number of vehicles serving a route\\
$\release{0}{j}$ & The time the $j$th vehicle begins service at stop $\busstop{0}$\\ 
$\travelmin{i}{}/\travelmax{i}{}$ & The minimum/maximum travel time for the route segment between stops $\busstop{i}$ and $\busstop{i+1}$\\
$\travel{i}{j}$ & The travel time for the $j$th vehicle to depart from $\busstop{i}$  on the route segment between stops $\busstop{i}$ and $\busstop{i+1}$\\
$\dwellmin{i}{}/\dwellmax{i}{}$ & The minimum/maximum dwell time at stop $\busstop{i}$\\
$\dwell{i}{j}$ & The dwell time for the $j$th vehicle to arrive at $\busstop{i}$\\
$\policy{i}(\cdot)$ & The policy in effect at stop $\busstop{i}$\\
$\holdingmin{i}{j}/\holdingmax{i}{j}$ & The minimum/maximum policy holding time for all buses at stop $\busstop{i}$\\
$\holding{i}{j}$ & The delay imposed by holding policy $\policy{i}$ for the $j$th vehicle to arrive at stop $\busstop{i}$\\
$\arrivalmin{i}{j}/\arrival{i}{j}/\arrivalmax{i}{j}$ & The earliest/actual/latest arrival time of the $j$th vehicle at $\busstop{i}$\\
$\estimatedarrival{i}{j}$ & The estimated time of the $j$th arrival at $\busstop{i}$\\
$\departuremin{i}{j}/\departure{i}{j}/\departuremax{i}{j}$ & The earliest/actual/latest departure time of the $j$th vehicle from $\busstop{i}$\\
$\headway{i}{j}$ & The $j$th vehicle's headway at $\busstop{i}$\\
$\wcht{i}$ & The upper bound on headway time at stop $\busstop{i}$\\
$\bcht{i}$ & The lower bound on headway at $\busstop{i}$\\
\hline
\end{tabularx}
\end{table}

\subsection{Transit System}
We consider a route as a sequence of $N$ stops $\busstop{0}\dots \busstop{N}$.
Each stop has an associated dwell time, which is the time a vehicle spends at the stop to load and unload passengers.
This is typically assumed to be drawn from a distribution~\cite{rashidi_estimating_2023}, which we assume has bounded support $[\dwellmin{i}{},\dwellmax{i}{}]$ at stop $\busstop{i}$ where $\dwellmin{i}{}$ is the minimum possible dwell time and $\dwellmax{i}{}$ is the maximum possible dwell time.
In the real world, dwell time is always lower bounded by at least $\dwellmin{i}{}=0$ (corresponding to the case where a driver skips a stop) and an unbounded dwell time ($\dwellmax{i}{}=\infty$) implies a bus break down or some other extreme event that prevents a bus from completing its route.
In practice, we care about the upper bound on dwell time that corresponds to some percentile of the underlying distribution.
This allows us to consider how well a holding policy performs given that dwell times conform to a higher density region of the underlying distribution and assign a probability that our analytical bound will not be exceeded during operation.
Likewise, we assume the travel time between stops $\busstop{i}$ and $\busstop{i+1}$ are drawn from the interval $[\travelmin{i}{},\travelmax{i}{}]$.
By the laws of physics (and most countries) a lower bound must exist and $\travelmin{i}{}>0$.
In the real world $\travelmax{i}{}$ could be unbounded due to an extreme event like a breakdown, but given a past distribution of travel times between stops, it is possible to select $\travelmax{i}{}$ to cover any arbitrary percentile of travel times.
Prior studies have considered bounded travel and dwell times~\cite{dueker_determinants_2004, rahman_analysis_2018}.

We consider a fleet of $M$ vehicles serving stops along the route.
Each vehicle must serve stops $\busstop{0}\dots \busstop{N}$ in order.
We consider the case where vehicles can overtake each other either during travel or at stops, which is possible and common in bus transit systems~\cite{saw_bus_2019}.
We also assume no vehicles are added or removed while serving the route.
While operators generally add vehicles for peak hours, the following analysis can simply be redone with the new number of vehicles.  We assume each vehicle is released into service at stop $\busstop{0}$ at time $\release{0}{j}$ where $j$ is the order in which the vehicle enters service (i.e., $j<j'\implies\release{0}{j}<\release{0}{j'}$).
The arrival time of the $j$th vehicle at stop $\busstop{i}$ is given by $\arrival{i}{j}$.
Note that $j$ is the index of arrival since the start of service, so $j$ counts up from $0\dots\infty$.
Each stop is governed by a holding policy, which instructs the driver of the $j$th vehicle to wait at the stop for at least $\holding{i}{j}$.
The holding policy at the $i$th stop is a function $\policy{i}(\cdot):X\mapsto\mathbb{R}_+$ where $X$ is the system state.
We assume that $\holding{i}{j}$ is computed upon a vehicle's arrival $\arrival{i}{j}$ and cannot be subsequently updated.
This assumption is based on real-world constraints where a bus driver can only safely check for a policy induced hold when stopped and a hold request cannot be rescinded once issued.
Policy induced holding occurs concurrently with dwell time, thus the total time the $j$th arrival spends stationary at $\busstop{i}$ is $\max\{\dwell{i}{j},\holding{i}{j}\}$ where $\dwell{i}{j}\in[\dwellmin{i}{},\dwellmax{i}{}]$.
We let $\departure{i}{j}$ represent the $j$th vehicle departure at $\busstop{i}$.
Letting $\travel{i}{j}\in[\travelmin{i}{},\travelmax{i}{}]$ be the travel time of the $j$th departure from $\busstop{i}$ to $\busstop{i+1}$, the progression of a single vehicle from its arrival at $\busstop{i}$ to its arrival at the subsequent stop is then given by Eqs.~(\ref{eq:departure}) and~(\ref{eq:arrival}).
Note that the $j$th arrival is not necessarily the $j$th departure, so Eq.~(\ref{eq:departure}) is written in terms of the $j'$th departure -- likewise the $j''$th arrival in Eq.~(\ref{eq:arrival}) -- where $j,j',j''\in0\dots\infty$.
\begin{align}
\departure{i}{j'}=&\;\arrival{i}{j} + \max\{\dwell{i}{j},\holding{i}{j}\}
\label{eq:departure}\\
\arrival{i+1}{j''} =&\;\departure{i}{j'}+ \travel{i}{j}
\label{eq:arrival}
\end{align}
Fig.~\ref{fig:model} provides an illustration of the time for three consecutive vehicles to travel between $\busstop{i-1}$ and $\busstop{i}$.

\begin{figure}
\centering
\resizebox{1\linewidth}{!}{
\begin{tikzpicture}
\draw[->] (0,0) -- (10,0);
\draw[->] (0,1) -- (10,1);
\node[anchor=west] (x-axis) at (10,0) {$t$};
\node[anchor=east] (stop0) at (0,0) {$\busstop{i-1}$};
\node[anchor=east] (stop1) at (0,1) {$\busstop{i}$};

\draw[orange] (0,0) -- (1,1);
\draw[orange] (1,1) -- (2,1);
\draw[orange] (2,1) -- (3,2);
\draw[draw=black, fill=orange, fill opacity=0.5, text opacity=1] (0,-1) rectangle ++(1,1) node[pos=.5] {$\travel{i-1}{j-1}$};
\draw[draw=black, fill=orange, fill opacity=0.5, text opacity=1] (1,-0.5) rectangle ++(1,0.5) node[pos=.5] {\footnotesize$\dwell{i}{j-1}$};
\draw[draw=black, fill=orange, fill opacity=0.5, text opacity=1] (1, -1) rectangle ++(0.75, 0.5) node[pos=.5] {\footnotesize$\holding{i}{j-1}$};
\draw[draw=black, fill=orange, fill opacity=0.5, text opacity=1] (2,-1) rectangle ++(1,1) node[pos=.5] {$\travel{i}{j-1}$};

\draw[blue] (1.75,0) -- (4.25,1);
\draw[blue] (4.25,1) -- (5.25,1);
\draw[blue] (5.25,1) -- (7.25,2);
\draw[draw=black, fill=blue, fill opacity=0.5, text opacity=1] (1.75,-2) rectangle ++(2.5,1) node[pos=.5] {$\travel{i-1}{j}$};
\draw[draw=black, fill=blue, fill opacity=0.5, text opacity=1] (4.25,-1.5) rectangle ++(0.75,0.5) node[pos=.5] {\footnotesize$\dwell{i}{j}$};
\draw[draw=black, fill=blue, fill opacity=0.5, text opacity=1] (4.25,-2) rectangle ++(1,0.5) node[pos=.5] {\footnotesize$\holding{i}{j}$};
\draw[draw=black, fill=blue, fill opacity=0.5, text opacity=1] (5.25,-2) rectangle ++(2,1) node[pos=.5] {$\travel{i}{j}$};

\draw[red] (6.0,0) -- (7.75,1);
\draw[red] (7.75,1) -- (8.75,1);
\draw[red] (8.75,1) -- (9.75,2);
\draw[draw=black, fill=red, fill opacity=0.5, text opacity=1] (6,-1) rectangle ++(1.75,1) node[pos=.5] {$\travel{i-1}{j+1}$};
\draw[draw=black, fill=red, fill opacity=0.5, text opacity=1] (7.75,-0.5) rectangle ++(1,0.5) node[pos=.5] {\footnotesize$\dwell{i}{j+1}$};
\draw[draw=black, fill=red, fill opacity=0.5, text opacity=1] (7.75,-1) rectangle ++(0.75,0.5) node[pos=.5] {\footnotesize$\holding{i}{j+1}$};
\draw[draw=black, fill=red, fill opacity=0.5, text opacity=1] (8.75,-1) rectangle ++(1,1) node[pos=.5] {$\travel{i}{j+1}$};

\draw[dotted] (1, 2.5) -- (1, -2.25);
\node[anchor=north] (aleader) at (1,-2.25) {$\arrival{i}{j-1}$};
\draw[dotted] (2,2.5) -- (2,-2.25);
\node[anchor=north] (dleader) at (2,-2.25) {$\departure{i}{j-1}$};
\draw[dotted] (4.25,2.5) -- (4.25,-2.25);
\node[anchor=north] (a) at (4.25,-2.25) {$\arrival{i}{j}$};
\draw[dotted] (5.25,2.5) -- (5.25,-2.25);
\node[anchor=north] (d) at (5.25,-2.25) {$\departure{i}{j}$};
\draw[dotted] (7.75,2.5) -- (7.75,-2.25);
\node[anchor=north] (afollower) at (7.75,-2.25) {$\arrival{i}{j+1}$};
\draw[dotted] (8.75,2.5) -- (8.75,-2.25);
\node[anchor=north] (dfollower) at (8.75, -2.25) {$\departure{i}{j+1}$};

\draw[draw=black, pattern color=green, pattern=north east lines, text opacity=1] (2,2) rectangle ++(2.25,0.5) node[pos=0.5] {$j$'s Headway};
\draw[draw=black, pattern color=magenta, pattern=north west lines, text opacity=1] (5.25,2) rectangle ++(2.5,0.5) node[pos=0.5] {$j$'s Tailway};

\end{tikzpicture}
}
\caption{Illustration of headway and tailway calculation for the $j$th arrival at stop $\busstop{i}$.  Vehicle $j-1$ is the leader and $j+1$ is the follower making vehicle $j$'s headway $\headway{i}{j}=\arrival{i}{j}-\departure{i}{j-1}$ and the tailway $\headway{i}{j+1}=\arrival{i}{j+1}-\departure{i}{j}$.}
\label{fig:model}
\end{figure}

\subsection{Metrics}
Headway time refers to the spacing between vehicles, i.e., the duration between the departure of one vehicle and the arrival of the subsequent vehicle at the same station~\cite{ntd_gloss}.
For the purposes of this paper, headway time is only defined at stops, not every point along a route; riders are only impacted by headway as measured at a station and the economic impact of uneven headway times is only realized based on vehicle occupancy, which only changes at the stops.
\begin{definition}
The headway $\headway{i}{j}$ of the $j$th arrival at $\busstop{i}$ is given by Eq.~(\ref{eq:headway}).
\begin{equation}
\label{eq:headway}
\headway{i}{j}=\max\{0,\arrival{i}{j}-\departure{i}{j-1}\}
\end{equation}
\end{definition}
\begin{note}
$\headway{i}{}$ is undefined at each $\busstop{i}$ until the second vehicle arrives ($t\geq\arrival{i}{1}$).
Furthermore, at $\busstop{0}$ headway remains undefined until the release of the last vehicle into service ($t\geq\release{0}{M}$).
\end{note}
Fig.~\ref{fig:model} illustrates the concept of headway calculation.
For convenience, we also introduce the term \textit{tailway}, which refers to the headway of the subsequent arrival, i.e., $\headway{i}{j+1}$.
While many studies focus on computing the average headway time, we focus on computing headway time bounds.
\begin{definition}
$\wcht{i}$ is the upper bound on headway time that can occur at $\busstop{i}$ during the service period and $\bcht{i}$ is the lower bound, i.e.:
\begin{align}
\label{eq:wcht}
\wcht{i}=&\max_{0<j<\infty}\{\headway{i}{j}\}\\
\label{eq:bcht}
\bcht{i}=&\min_{0<j<\infty}\{\headway{i}{j}\}
\end{align}
\end{definition}
Given a fixed number of vehicles serving a route, bunching causes $\wcht{i}$ to increase and $\bcht{i}$ to decrease.
In areas where waiting for the bus means standing unsheltered in extreme cold or heat, a bound on maximum time spent waiting ($\wcht{i}$) may be more important than the average value.
Likewise, bounding minimum headway can ensure good bus utilization and efficient use of resources.

\subsection{Bus Holding Policies}
\label{sec:policies}
A holding policy guarantees that a vehicle remains stationary at a stop for a minimum amount of time regardless of $\dwell{i}{j}$.
This interval is determined by the holding policy at the stop $\policy{i}(\cdot)$, which computes $\holding{i}{j}$ for every vehicle at the moment it reaches the stop.
This reflects real world conditions where issuing policy updates when a driver is approaching the end of their hold period can result in the updates being ignored due to the notification not arriving on time or the driver being preoccupied with other tasks.
On the other hand, most control systems seek to apply actions as late as possible so as to minimize the affect of random variations in system state that happen between when a hold is requested and when it is executed.
Since the vehicle's dwell time at a stop is unknown and drawn from a distribution, arrival is the latest possible moment when this should be calculated.
Policies are applied on a per stop basis, not to individual vehicles and not across an entire route.
The reason for this is practical: it is not feasible for operators to idle vehicles at every stop, or the time may be limited due to infrastructure constraints.
Consider, for example, stops along a two-lane road with no turnouts: in these cases, a vehicle can still stop and dwell for passengers to alight and board, but idling for too long interferes with other traffic.
Furthermore, some stops along a route may require a vehicle to synchronize with other transit times that cannot be adjusted (e.g. commuter rail).
The space of possible policies is infinite, so we focus on three sets that encompass the policies most commonly deployed by transit operators.  In every case, a policy is a function that maps some set of state measurements to a positive real number (hold time), so machine learning-based policies fit into our analysis framework as long as they are a member of one of the following sets.

\subsubsection{Null Policy ($\nullpolicy$)}
If $\policy{i}\in\nullpolicy$ vehicles are never held, i.e., for stop $\busstop{i}$, $\holding{i}{j}=0\;\forall j\in0\dots\infty$.  This is the default policy for most stops in real world transit systems.
\subsubsection{Static Schedule-Based Policies ($\schedpolicy$)}
Under this set of policies vehicles depart stops at predetermined times.  We let $\mathbf{T}$ be a list of scheduled departure times.  Only one vehicle is permitted to depart from the stop at each of the entries in $\mathbf{T}$. If multiple vehicles are held at the stop waiting for the next scheduled departure, we assume they depart in the order of their arrival, i.e., a stop governed by this policy behaves as a FIFO queue.  If no vehicle departs at a time listed in $\mathbf{T}$, the next vehicle to arrive at the stop receives no holding time, corresponding to the case where vehicles are running behind schedule.  Mathematically, we model this as the stateful function shown in Eq.~(\ref{eq:pisched}) where $l$ is the index of the last used departure time in $\mathbf{T}$ and $t$ is the current time.  Note that if dwell time $\dwell{i}{j}$ exceeds the holding time $\holding{i}{j}$, it is still possible for a vehicle to depart after its scheduled release.
\begin{equation}
\label{eq:pisched}
\policy{i}(t;l,\mathbf{T}) = \max{\{0, \mathbf{T}_l-t\}}
\end{equation}
\subsubsection{Headway-Based Control ($\dynamicpolicy$)}
Under this set of policies the time a vehicle idles at a given stop is a function of the locations of other vehicles along the route.
The goal is to hold a vehicle so that the gap between headway is roughly equal across all arrival and departure pairs.
We assume that the holding calculated by headway-based control is only a function of the current (and past) positions of vehicles along a route, i.e., $\policy{i}(X):\mathbb{R}_+^{2N\times M}\mapsto\mathbb{R}_+$.
where $X$ refers to the system state -- vectors of the last arrival times $\mathbf{a}^{(i')}$ and departure times $\mathbf{d}^{(i')}$ for each of $M$ vehicles at stop $i'$, for $i'\in[0\dots N]$.
Note that it is not required for a dynamic policy to use all of this information and only a subset may be required.
For our analysis to work, we make the following assumptions about $\policy{i}\in\dynamicpolicy$:
\begin{enumerate}
    \item The output is bounded: $0\leq\holding{i}{j}<\infty$.
    \item The policy is 1-left-Lipschitz~\cite{duetting_optimal_2023} with respect to headway, i.e., when headway is computable from the system state, the following inequality must hold: $\policy{i}(\headway{i}{j})-\policy{i}(\headway{i}{j'})\leq \headway{i}{j'} - \headway{i}{j}\forall \headway{i}{j}\leq\headway{i}{j'}$.
\end{enumerate}
Assumption (1) is a requirement for any practical policy as unbounded $\holding{i}{j}$ would mean a bus is stuck at a stop forever.
Assumption (2) requires that a vehicle arriving with more tailway is not held longer than if it had arrived with less tailway.
Prior work has shown it is possible to train Lipschitz bounded~\cite{fazlyab_efficient_2019} neural networks, so $\dynamicpolicy$ still represents a broad class of data driven policies.
Many headway-based holding policies~\cite{daganzo_a_2009,zhou_a_2022} already satisfy these properties.

\textbf{Problem Statement} -- The goal of this paper is to identify the upper and lower bounds on headway times for all stops in any bus route modeled by the system described above.
This problem is challenging because headway bounds are dependent not just on the maximum and minimum travel and dwell times between two stops, but also all the possible travel and dwell times between any two stops taken by every vehicle servicing the route, at every time since the beginning of service, as these affect policy holding for stops governed by $\schedpolicy$ and $\dynamicpolicy$, which in turn affect $\wcht{i}$ and $\bcht{i}$.

\section{Headway Time Bounds}
We first show that in a route with all stops governed by $\policy{i}\in\nullpolicy$ (and in some cases $\schedpolicy$) headway is upper bounded by the maximum travel time and lower bounded by~$0$.
Finally, we use a dynamic program to establish tight upper and lower headway bounds for any route.
In order to accomplish this, we must formally define a \textit{bunch}. 
\begin{definition}
\label{def:bunch}
A bunch is a sequence of $v$ vehicles with leader arriving at $\busstop{i}$ at $\arrival{i}{j}$ and the last bus in the bunch departing the same $\busstop{i}$ at $\departure{i}{j+v-1}$.  The time for a bunch to pass the stop is given as $\departure{i}{j+v-1}-\arrival{i}{j}$.
\end{definition}

\subsection{Insufficiency of Static Policies}
\begin{theorem}
\label{thm:wcht=wchtt}
In a route with all stops governed by $\nullpolicy$, if $\travelmax{i}{}>\travelmin{i}{}$ or $\dwellmax{i}{}>\dwellmin{i}{}$ for at least one stop $\busstop{i}$, then $\bcht{k}=0$ and $\wcht{k}=\sum_{i'=0\dots N}(\travelmax{i'}{}+\dwellmax{i'}{})-\dwellmax{k}{}\forall k\in0\dots N$ (which is equivalent to the maximum time required to travel the entire route, including dwell time).
\end{theorem}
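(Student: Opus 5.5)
Under $\nullpolicy$ every $\holding{i}{j}=0$, so each vehicle moves independently: by Eqs.~(\ref{eq:departure})--(\ref{eq:arrival}) its trajectory is $\release{0}{j}$ plus a sum of dwell and travel times, each chosen freely and independently of every other vehicle. Headway is the gap between the departure of one vehicle and the arrival of the next \emph{in arrival order} at the same stop. The proof therefore splits into a universal bound, valid for every realization, and an adversarial construction that attains it.

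\textbf{Lower bound.} By Eq.~(\ref{eq:headway}), $\headway{k}{j}\ge 0$ always, so $\bcht{k}\ge 0$. To show that $0$ is attained, I would use the stop $\busstop{i}$ where $\travelmax{i}{}>\travelmin{i}{}$ (or $\dwellmax{i}{}>\dwellmin{i}{}$). Let a leader take the slow value at that segment or stop and a follower take the fast value. Make both vehicles maximally slow or fast elsewhere as needed so that, at $\busstop{i}$ or $\busstop{i+1}$, the follower arrives no later than the leader departs; this gives headway $\max\{0,\cdot\}=0$.

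The snag is that the releases $\release{0}{j}$ are fixed gaps, and one segment's slack may not close them. I would handle this by exploiting that the service period is unbounded ($j\to\infty$) and that overtaking is allowed. Pick vehicles far apart in release order, or consecutive vehicles, and accumulate the slack over repeated circuits. Alternatively, and this is what I expect the paper intends, treat the releases as adversarial as well, so that any positive slack suffices. This slack accumulation is the step I expect to be the main obstacle to making rigorous. For stops $k$ other than $i$, I propagate the bunch: once two vehicles coincide, giving both identical dwell and travel times keeps them coincident at every later stop.

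\textbf{Upper bound.} Fix a stop $k$ and consecutive arrivals $j-1$ and $j$ there. Since headway is measured at a stop, the interval between $\departure{k}{j-1}$ and $\arrival{k}{j}$ can be at most the slowest time for a vehicle to traverse the loop from departing $\busstop{k}$ back to arriving at $\busstop{k}$. That traversal covers all segments plus the dwells at every stop except $\busstop{k}$, which gives $\sum_{i'}(\travelmax{i'}{}+\dwellmax{i'}{})-\dwellmax{k}{}$. I would argue this by cyclic ordering: if the gap were larger, some vehicle would have had to pass $\busstop{k}$ in between, since all vehicles circulate on the same loop. Making this precise requires care in interpreting the route as cyclic with $M$ vehicles.

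\textbf{Attaining the upper bound.} Bunch all vehicles into one group using the zero-headway construction above, moving at identical times. Then at $\busstop{k}$ the last vehicle of the bunch departs, and the next arrival is the bunch's lead vehicle completing a maximally slow circuit. This achieves the bound exactly, which also exhibits the tightness claim.
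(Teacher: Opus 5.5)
Your attainment constructions match the paper's. The paper brings a follower onto a leader by keeping all other dwell and travel times equal. It lets the follower gain up to $\travelmax{i}{}-\travelmin{i}{}$ per circuit on the slack segment, for $\lfloor \headway{i}{j}/(\travelmax{i}{}-\travelmin{i}{})\rfloor$ circuits plus one partial adjustment. After that the coincident vehicles copy each other's times. The fully bunched fleet then runs every segment and stop at maximum times, acting like a single vehicle, which gives headway $C=\sum_{i'}(\travelmax{i'}{}+\dwellmax{i'}{})-\dwellmax{k}{}$ at $\busstop{k}$.

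So the step you flagged as the main obstacle is routine. Since vehicles do not interact under $\nullpolicy$, each of the $M-1$ followers can also be merged independently. Drop the ``adversarial releases'' fallback, though. The release times are given data, the theorem must hold for all of them, and the paper does not assume otherwise.

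Where you genuinely differ is the decomposition: you separate a universal bound from its attainment. The paper's proof only exhibits the bunched trajectory and asserts that it is extremal; it never shows that no realization exceeds $C$. Your upper-bound step therefore supplies something the paper omits, and that is what gives the full two-sided characterization.

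Do not argue it via the cyclic order of vehicles, because overtaking destroys that order. Instead, let $V$ be the vehicle making the departure $\departure{k}{j-1}$. Under $\nullpolicy$ it is back at $\busstop{k}$ no later than $\departure{k}{j-1}+C$. The departures with indices $0,\dots,j-1$ each stem from a distinct arrival occurring no later than $\departure{k}{j-1}$, so at least $j$ arrivals strictly precede $V$'s return. That return therefore has arrival index at least $j$, and hence $\arrival{k}{j}\le\departure{k}{j-1}+C$.

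The lower bound $\bcht{k}\ge 0$ is, as you say, immediate from Eq.~(\ref{eq:headway}).
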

\begin{proof}
The maximum and minimum headway time occurs when all $M$ vehicles are bunched together such that $\exists j|\arrival{k}{j}=\arrival{k}{j+m}\forall m\in0\dots M$, i.e., all buses arrive and depart $\busstop{k}$ simultaneously ($\bcht{k}=0$).
Given this precondition, it is always possible for all $M$ vehicles to take the maximum travel time and dwell times for all stops back to $\busstop{k}$, i.e., all vehicles arrive at and depart from the remaining stops simultaneously.
This is functionally equivalent to a single vehicle servicing the route, in which case maximum headway is the time between the vehicle departing $\busstop{k}$ and arriving after traversing the entire route: $\wcht{k}=\sum_{i'=0\dots N}(\travelmax{i'}{}+\dwellmax{i'}{})-\dwellmax{k}{}$ (recall that under $\nullpolicy$, $\holding{i}{j}=0\;\forall i,j$).
To show that $\arrival{k}{j}=\arrival{k}{j+m}\forall m\in0\dots M$ is always possible, assume either $\travelmax{i}{}>\travelmin{i}{}$ or $\dwellmax{i}{}>\dwellmin{i}{}$ for at least one stop $\busstop{i}$ along the route.
Consider two vehicles at $\busstop{i}$: a leader $j-1$ and a follower $j$.
Under the system model they can take the same transit and dwell times for all other stops excluding $\busstop{i}$.
Let $\headway{i}{j}=\arrival{i}{j}-\departure{i}{j-1}$ denote the initial headway time between leader and follower.
Suppose that for the next $\lfloor \headway{i}{j}/(\travelmax{i}{}-\travelmin{i}{})\rfloor$ arrivals at $\busstop{i}$, the leader takes $\travelmax{i}{}$ to complete $\busstop{i
}$'s travel and the follower takes $\travelmin{i}{}$.
On the very next iteration, if the leader still takes $\travelmax{i}{}$ to complete the leg, and the follower takes $\travelmin{i}{}+\headway{i}{j}\mod (\travelmax{i}{}-\travelmin{i}{})$ the headway between them will be zero.
In the worst case, both vehicles now take the same time to complete each route segment.
The same reasoning can be applied for $\dwellmax{i}{}>\dwellmin{i}{}$.
\end{proof}
\begin{note}
\label{note:schedpolicy}
It follows from Theorem~\ref{thm:wcht=wchtt} that for a route with stop $\busstop{k}$ controlled by $\policy{k}\in\schedpolicy$, as long as $\sum_{i=k'\dots k}{(\travelmax{i}{}+\dwellmax{i}{}-\travelmin{i}{}-\dwellmin{i}{})}>\mathbf{T}_l-\mathbf{T}_{l-1}\forall{l}=0\dots\infty$ where $k'$ is the stop index after the last schedule controlled stop and all other stops are controlled by $\nullpolicy$, then the maximum headway time is equivalent to that given in Theorem~\ref{thm:wcht=wchtt}.
\end{note}
Note~\ref{note:schedpolicy} implies that even if transit operators use schedule-driven control at the terminal, it will not be sufficient to guarantee that headway is less than the route completion time if it is possible for vehicles to arrive behind schedule.
To build intuition, Fig.~\ref{fig:thmwcht=wcht} illustrates how bunching can lead to the bounds established in Theorem~\ref{thm:wcht=wchtt} and Note~\ref{note:schedpolicy} in a route with $3$ stops and $3$ buses.

\begin{figure}
\centering
\resizebox{1\linewidth}{!}{
\begin{tikzpicture}
\draw[->] (0,0) -- (11.25,0);
\draw[->] (0,1) -- (11.25,1);
\draw[->] (0,2) -- (11.25,2);
\node[anchor=west] (x-axis) at (11.25,0) {$t$};
\node[anchor=east] (stop0) at (0,0) {$\busstop{0}$};
\node[anchor=east] (stop1) at (0,1) {$\busstop{1}$};
\node[anchor=east] (stop2) at (0,2) {$\busstop{2}$};

\draw[orange] (0,0) -- (1,1);
\draw[orange] (1,1) -- (1.25,1);
\draw[orange] (1.25,1) -- (2.25,2);
\draw[orange] (2.25,2) -- (2.5,2);
\draw[orange] (2.5,2) -- (3.5,3);
\draw[orange, dashed] (3.5,3) -- (3.5,0);
\draw[orange] (3.5,0) -- (3.75,0);
\draw[orange] (3.75,0) -- (4.75,1);
\draw[orange] (4.75,1) -- (5,1);
\draw[orange] (5,1) -- (6,2);
\draw[orange] (6,2) -- (6.25,2);
\draw[orange] (6.25,2) -- (7.25,3);
\draw[orange, dashed] (7.25,3) -- (7.25,0);
\draw[orange] (7.25,0) -- (7.5, 0);
\draw[orange] (7.5,0) -- (8.5, 1);
\draw[orange] (8.5,1) -- (8.75, 1);
\draw[orange] (8.75,1) -- (9.75, 2);
\draw[orange] (9.75,2) -- (10, 2);
\draw[orange] (10,2) -- (11,3);
\draw[orange, dashed] (11,3) -- (11,0);

\draw[blue] (1,0) -- (1.5,1);
\draw[blue] (1.5,1) -- (1.75,1);
\draw[blue] (1.75,1) -- (2.25,2);
\draw[blue] (2.25,2) -- (2.5,2);
\draw[blue] (2.5,2) -- (3.5,3);
\draw[blue, dashed] (3.5,3) -- (3.5,0);
\draw[blue] (3.5,0) -- (3.75,0);
\draw[blue] (3.75,0) -- (4.75,1);
\draw[blue] (4.75,1) -- (5,1);
\draw[blue] (5,1) -- (6,2);
\draw[blue] (6,2) -- (6.25,2);
\draw[blue] (6.25,2) -- (7.25,3);
\draw[blue, dashed] (7.25,3) -- (7.25,0);
\draw[blue] (7.25,0) -- (7.5, 0);
\draw[blue] (7.5,0) -- (8.5, 1);
\draw[blue] (8.5,1) -- (8.75, 1);
\draw[blue] (8.75,1) -- (9.75, 2);
\draw[blue] (9.75,2) -- (10, 2);
\draw[blue] (10,2) -- (11,3);
\draw[blue, dashed] (11,3) -- (11,0);

\draw[red] (2,0) -- (2.5,1);
\draw[red] (2.5,1) -- (2.75,1);
\draw[red] (2.75,1) -- (3.25,2);
\draw[red] (3.25,2) -- (3.5,2);
\draw[red] (3.5,2) -- (4,3);
\draw[red, dashed] (4,3) -- (4,0);
\draw[red] (4,0) -- (4.25,0);
\draw[red] (4.25,0) -- (4.75,1);
\draw[red] (4.75,1) -- (5,1);
\draw[red] (5,1) -- (6,2);
\draw[red] (6,2) -- (6.25,2);
\draw[red] (6.25,2) -- (7.25,3);
\draw[red, dashed] (7.25,3) -- (7.25,0);
\draw[red] (7.25,0) -- (7.5, 0);
\draw[red] (7.5,0) -- (8.5, 1);
\draw[red] (8.5,1) -- (8.75, 1);
\draw[red] (8.75,1) -- (9.75, 2);
\draw[red] (9.75,2) -- (10, 2);
\draw[red] (10,2) -- (11,3);
\draw[red, dashed] (11,3) -- (11,0);

\draw[draw=black, pattern color=green, pattern=north east lines, text opacity=1] (7.5,-3) rectangle ++(3.5,0.5) node[pos=0.5] {$\wcht{0}$};
\draw[draw=black, pattern color=green, pattern=north east lines, text opacity=1] (6.25,-2.5) rectangle ++(3.5,0.5) node[pos=0.5] {$\wcht{2}$};
\draw[draw=black, pattern color=green, pattern=north east lines, text opacity=1] (5,-2) rectangle ++(3.5,0.5) node[pos=0.5] {$\wcht{1}$};

\draw[draw=black, fill=orange, fill opacity=0.5, text opacity=1] (10,-0.5) rectangle ++(1,0.5) node[pos=0.5] {$\travelmax{2}{}$};
\draw[draw=black, fill=orange, fill opacity=0.5, text opacity=1] (9.75,-0.5) rectangle ++(0.25,0.5) node[pos=0.5] {$\delta$};
\draw[draw=black, fill=orange, fill opacity=0.5, text opacity=1] (8.75,-0.5) rectangle ++(1,0.5) node[pos=0.5] {$\travelmax{1}{}$};
\draw[draw=black, fill=orange, fill opacity=0.5, text opacity=1] (8.5,-0.5) rectangle ++(0.25,0.5) node[pos=0.5] {$\delta$};
\draw[draw=black, fill=orange, fill opacity=0.5, text opacity=1] (7.5,-0.5) rectangle ++(1,0.5) node[pos=0.5] {$\travelmax{0}{}$};
\draw[draw=black, fill=orange, fill opacity=0.5, text opacity=1] (7.25,-0.5) rectangle ++(0.25,0.5) node[pos=0.5] {$\delta$};
\draw[draw=black, fill=orange, fill opacity=0.5, text opacity=1] (6.25,-0.5) rectangle ++(1,0.5) node[pos=0.5] {$\travelmax{2}{}$};
\draw[draw=black, fill=orange, fill opacity=0.5, text opacity=1] (6,-0.5) rectangle ++(0.25,0.5) node[pos=0.5] {$\delta$};
\draw[draw=black, fill=orange, fill opacity=0.5, text opacity=1] (5,-0.5) rectangle ++(1,0.5) node[pos=0.5] {$\travelmax{1}{}$};
\draw[draw=black, fill=orange, fill opacity=0.5, text opacity=1] (4.75,-0.5) rectangle ++(0.25,0.5) node[pos=0.5] {$\delta$};
\draw[draw=black, fill=orange, fill opacity=0.5, text opacity=1] (3.75,-0.5) rectangle ++(1,0.5) node[pos=0.5] {$\travelmax{0}{}$};
\draw[draw=black, fill=orange, fill opacity=0.5, text opacity=1] (3.5,-0.5) rectangle ++(0.25,0.5) node[pos=0.5] {$\delta$};
\draw[draw=black, fill=orange, fill opacity=0.5, text opacity=1] (2.5,-0.5) rectangle ++(1,0.5) node[pos=0.5] {$\travelmax{2}{}$};
\draw[draw=black, fill=orange, fill opacity=0.5, text opacity=1] (2.25,-0.5) rectangle ++(0.25,0.5) node[pos=0.5] {$\delta$};
\draw[draw=black, fill=orange, fill opacity=0.5, text opacity=1] (1.25,-0.5) rectangle ++(1,0.5) node[pos=0.5] {$\travelmax{1}{}$};
\draw[draw=black, fill=orange, fill opacity=0.5, text opacity=1] (1,-0.5) rectangle ++(0.25,0.5) node[pos=0.5] {$\delta$};
\draw[draw=black, fill=orange, fill opacity=0.5, text opacity=1] (0,-0.5) rectangle ++(1,0.5) node[pos=0.5] {$\travelmax{0}{}$};

\draw[draw=black, fill=blue, fill opacity=0.5, text opacity=1] (10,-1) rectangle ++(1,0.5) node[pos=0.5] {$\travelmax{2}{}$};
\draw[draw=black, fill=blue, fill opacity=0.5, text opacity=1] (9.75,-1) rectangle ++(0.25,0.5) node[pos=0.5] {$\delta$};
\draw[draw=black, fill=blue, fill opacity=0.5, text opacity=1] (8.75,-1) rectangle ++(1,0.5) node[pos=0.5] {$\travelmax{1}{}$};
\draw[draw=black, fill=blue, fill opacity=0.5, text opacity=1] (8.5,-1) rectangle ++(0.25,0.5) node[pos=0.5] {$\delta$};
\draw[draw=black, fill=blue, fill opacity=0.5, text opacity=1] (7.5,-1) rectangle ++(1,0.5) node[pos=0.5] {$\travelmax{0}{}$};
\draw[draw=black, fill=blue, fill opacity=0.5, text opacity=1] (7.25,-1) rectangle ++(0.25,0.5) node[pos=0.5] {$\delta$};
\draw[draw=black, fill=blue, fill opacity=0.5, text opacity=1] (6.25,-1) rectangle ++(1,0.5) node[pos=0.5] {$\travelmax{2}{}$};
\draw[draw=black, fill=blue, fill opacity=0.5, text opacity=1] (6,-1) rectangle ++(0.25,0.5) node[pos=0.5] {$\delta$};
\draw[draw=black, fill=blue, fill opacity=0.5, text opacity=1] (5,-1) rectangle ++(1,0.5) node[pos=0.5] {$\travelmax{1}{}$};
\draw[draw=black, fill=blue, fill opacity=0.5, text opacity=1] (4.75,-1) rectangle ++(0.25,0.5) node[pos=0.5] {$\delta$};
\draw[draw=black, fill=blue, fill opacity=0.5, text opacity=1] (3.75,-1) rectangle ++(1,0.5) node[pos=0.5] {$\travelmax{0}{}$};
\draw[draw=black, fill=blue, fill opacity=0.5, text opacity=1] (3.5,-1) rectangle ++(0.25,0.5) node[pos=0.5] {$\delta$};
\draw[draw=black, fill=blue, fill opacity=0.5, text opacity=1] (2.5,-1) rectangle ++(1,0.5) node[pos=0.5] {$\travelmax{2}{}$};
\draw[draw=black, fill=blue, fill opacity=0.5, text opacity=1] (2.25,-1) rectangle ++(0.25,0.5) node[pos=0.5] {$\delta$};
\draw[draw=black, fill=blue, fill opacity=0.5, text opacity=1] (1.75,-1) rectangle ++(0.5,0.5) node[pos=0.5] {\footnotesize$\travelmin{1}{}$};
\draw[draw=black, fill=blue, fill opacity=0.5, text opacity=1] (1.5,-1) rectangle ++(0.25,0.5) node[pos=0.5] {$\delta$};
\draw[draw=black, fill=blue, fill opacity=0.5, text opacity=1] (1,-1) rectangle ++(0.5,0.5) node[pos=0.5] {\footnotesize$\travelmin{0}{}$};

\draw[draw=black, fill=red, fill opacity=0.5, text opacity=1] (10,-1.5) rectangle ++(1,0.5) node[pos=0.5] {$\travelmax{2}{}$};
\draw[draw=black, fill=red, fill opacity=0.5, text opacity=1] (9.75,-1.5) rectangle ++(0.25,0.5) node[pos=0.5] {$\delta$};
\draw[draw=black, fill=red, fill opacity=0.5, text opacity=1] (8.75,-1.5) rectangle ++(1,0.5) node[pos=0.5] {$\travelmax{1}{}$};
\draw[draw=black, fill=red, fill opacity=0.5, text opacity=1] (8.5,-1.5) rectangle ++(0.25,0.5) node[pos=0.5] {$\delta$};
\draw[draw=black, fill=red, fill opacity=0.5, text opacity=1] (7.5,-1.5) rectangle ++(1,0.5) node[pos=0.5] {$\travelmax{0}{}$};
\draw[draw=black, fill=red, fill opacity=0.5, text opacity=1] (7.25,-1.5) rectangle ++(0.25,0.5) node[pos=0.5] {$\delta$};
\draw[draw=black, fill=red, fill opacity=0.5, text opacity=1] (6.25,-1.5) rectangle ++(1,0.5) node[pos=0.5] {$\travelmax{2}{}$};
\draw[draw=black, fill=red, fill opacity=0.5, text opacity=1] (6,-1.5) rectangle ++(0.25,0.5) node[pos=0.5] {$\delta$};
\draw[draw=black, fill=red, fill opacity=0.5, text opacity=1] (5,-1.5) rectangle ++(1,0.5) node[pos=0.5] {$\travelmax{1}{}$};
\draw[draw=black, fill=red, fill opacity=0.5, text opacity=1] (4.75,-1.5) rectangle ++(0.25,0.5) node[pos=0.5] {$\delta$};
\draw[draw=black, fill=red, fill opacity=0.5, text opacity=1] (4.25,-1.5) rectangle ++(0.5,0.5) node[pos=0.5] {\footnotesize$\travelmin{0}{}$};
\draw[draw=black, fill=red, fill opacity=0.5, text opacity=1] (4,-1.5) rectangle ++(0.25,0.5) node[pos=0.5] {$\delta$};
\draw[draw=black, fill=red, fill opacity=0.5, text opacity=1] (3.5,-1.5) rectangle ++(0.5,0.5) node[pos=0.5] {\footnotesize$\travelmin{2}{}$};
\draw[draw=black, fill=red, fill opacity=0.5, text opacity=1] (3.25,-1.5) rectangle ++(0.25,0.5) node[pos=0.5] {$\delta$};
\draw[draw=black, fill=red, fill opacity=0.5, text opacity=1] (2.75,-1.5) rectangle ++(0.5,0.5) node[pos=0.5] {\footnotesize$\travelmin{1}{}$};
\draw[draw=black, fill=red, fill opacity=0.5, text opacity=1] (2.5,-1.5) rectangle ++(0.25,0.5) node[pos=0.5] {$\delta$};
\draw[draw=black, fill=red, fill opacity=0.5, text opacity=1] (2,-1.5) rectangle ++(0.5,0.5) node[pos=0.5] {\footnotesize$\travelmin{0}{}$};

\end{tikzpicture}
}
\caption{Illustration of the sequence of events that leads to the worst case upper and lower bounds in Theorem~\ref{thm:wcht=wchtt} for a route with 3 stops and 3 vehicles.}
\label{fig:thmwcht=wcht}
\end{figure}

\subsection{Headway Bounds for Routes with Any Set of Policies}
Further analysis is required to make a meaningful statement about headway bounds in routes with stops under $\dynamicpolicy$.
By our definition $\nullpolicy\subset\dynamicpolicy$ and strict monotonicity with respect to headway is not required, so there is no guarantee that bunch times increase on the output of a stop governed by $\dynamicpolicy$.
Instead, we first focus on identifying the condition that leads to maximum headway (when vehicles are as tightly bunched as possible) and then show the conditions that must occur for bunch time to be minimized.

\begin{lemma}
\label{lem:earlyarrival}
Given that the $j-1$st departure from $\busstop{i-1}$ occurs at fixed time $\departure{i-1}{j-1}$, a later arrival of the $j$th vehicle at $\busstop{i-1}$ can always lead to a later arrival of the $j$th vehicle at $\busstop{i}$, i.e.:
\begin{equation}
\label{eq:lemearlyarrival}
\arrival{i-1}{j}\leq\arrivalprime{i-1}{j}\implies\arrivalmax{i}{j}\leq\arrivalmaxprime{i}{j}
\end{equation}
\end{lemma}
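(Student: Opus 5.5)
Fix the leader's departure $\departure{i-1}{j-1}$, and write $\arrivalmax{i}{j}$ as a function of the follower's arrival time $x=\arrival{i-1}{j}$. By Eqs.~(\ref{eq:departure}) and~(\ref{eq:arrival}), the largest arrival at $\busstop{i}$ is obtained by taking the maximal dwell $\dwellmax{i-1}{}$ and the maximal travel time $\travelmax{i-1}{}$. Holding is monotone in the dwell and travel choices, since $\max\{\cdot,\cdot\}$ is nondecreasing and holding is fixed at arrival. So the quantity to study is
\begin{equation*}
F(x)=x+\max\{\dwellmax{i-1}{},\holding{i-1}{j}(x)\}+\travelmax{i-1}{},
\end{equation*}
and it suffices to show that $F$ is nondecreasing in $x$. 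The maximum over dwell and travel times does not depend on $x$ except through the holding term, so the lemma reduces to showing that $x+\holding{i-1}{j}(x)$ is nondecreasing. Indeed $x+\max\{D,\phi(x)\}=\max\{x+D,\,x+\phi(x)\}$, and the first term is trivially increasing.

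I split by the policy class of $\busstop{i-1}$.

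For $\nullpolicy$, $\holding{i-1}{j}=0$ and $F$ is trivially increasing.

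For $\schedpolicy$, Eq.~(\ref{eq:pisched}) gives $x+\holding{i-1}{j}(x)=\max\{x,\mathbf{T}_l\}$, which is nondecreasing in $x$ provided the index $l$ is unchanged. The subtle case is FIFO queueing: a later arrival might change which schedule slot the vehicle receives. Here I would argue that the slots are assigned in arrival order, and that delaying vehicle $j$ while $j-1$'s departure is fixed can only move $j$ to the same or a later slot $\mathbf{T}_{l'}\geq\mathbf{T}_l$. It also cannot let $j$ slip to "no holding" in a way that yields an earlier departure, because $\max\{x',\mathbf{T}_{l'}\}\geq\max\{x,\mathbf{T}_l\}$ whenever $x'\geq x$ and $l'\geq l$.

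For $\dynamicpolicy$, the headway is $\headway{i-1}{j}=\max\{0,x-\departure{i-1}{j-1}\}$, which is nondecreasing in $x$ with slope at most $1$. Take $x\le x'$ with headways $h\le h'$. Assumption~(2), the 1-left-Lipschitz property, gives $\policy{i-1}(h)-\policy{i-1}(h')\le h'-h\le x'-x$. Rearranged, this is $x+\policy{i-1}(h)\le x'+\policy{i-1}(h')$, which is exactly the monotonicity of $x+\holding{i-1}{j}(x)$.

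Combining the cases, $F(x)\le F(x')$, so $\arrivalmax{i}{j}\le\arrivalmaxprime{i}{j}$. The phrase "can always lead to" is witnessed by choosing $\dwellmax{i-1}{}$ and $\travelmax{i-1}{}$ in both scenarios.

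I expect two points to be the main obstacles. The first is the $\schedpolicy$ case, where I must check carefully that the stateful slot index $l$ behaves monotonically under the FIFO semantics. The second is the $\dynamicpolicy$ case, where the policy may depend on state other than headway. For the latter I would restrict the argument to the policy's headway argument and note that all other state is held fixed when only $\arrival{i-1}{j}$ is perturbed.
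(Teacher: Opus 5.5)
Your proposal is correct and follows the paper's proof. The paper also reduces the lemma to showing that the departure from $\busstop{i-1}$ (arrival plus $\max\{\dwellmax{i-1}{},\holding{i-1}{j}\}$) is nondecreasing in the arrival time, checks the three policy classes as you do, and then adds $\travelmax{i-1}{}$; in particular it uses the same-or-later schedule slot for $\schedpolicy$ and the 1-left-Lipschitz rearrangement $\arrival{i-1}{j}+\holding{i-1}{j}\leq\arrivalprime{i-1}{j}+\holdingprime{i-1}{j}$ for $\dynamicpolicy$. Your identity $x+\max\{D,\phi\}=\max\{x+D,x+\phi\}$ is a tidier replacement for the paper's case split on whether $\dwellmax{i-1}{}$ dominates both holding times, and your treatment of the $\max\{0,\cdot\}$ clamp in the headway is slightly more careful than the paper's direct substitution.
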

\begin{proof}
When $\dwellmax{i-1}{}\geq\max\{\holdingprime{i-1}{j},\holding{i-1}{j}\}$, the maximum arrival time at $\busstop{i}$ is given by $\arrivalmax{i}{j}=\arrival{i-1}{j}+\dwellmax{i-1}{}+\travelmax{i-1}{}$ or $\arrivalmaxprime{i}{j}=\arrivalprime{i-1}{j}+\dwellmax{i-1}{}+\travelmax{i-1}{}$.
Algebraic manipulation reveals that $\arrivalmax{i}{j}\leq\arrivalmaxprime{i}{j}$.
When $\dwellmax{i-1}{}<\max\{\holdingprime{i-1}{j},\holding{i-1}{j}\}$, we need to show that the departure of the later arrival is always later, i.e., $\departure{i-1}{j}\leq\departureprime{i-1}{j}$.
While this is not true for any arbitrary policy, we show that it is true for the three policy classes considered in our formulation:
\begin{itemize}
    \item $\policy{i-1}\in\nullpolicy$: $\holding{i-1}{j}=\holdingprime{i-1}{j}$. This means departure time equals arrival time for both $\arrival{i-1}{j}$ and $\arrivalprime{i-1}{j}$, thus $\departure{i-1}{j}\leq\departureprime{i-1}{j}$.
    \item $\policy{i-1}\in\schedpolicy$: Consider the next scheduled release time $\mathbf{T}_l$.  For the $j$th arrival: $\holding{i-1}{j}=\max\{0,\mathbf{T}_l-\arrival{i-1}{j}\}$.  For $\arrivalprime{i-1}{j}\geq\arrival{i-1}{j}$, $j$ will either have the same scheduled release time or a later one: $\holdingprime{i-1}{j}=\max\{0,\mathbf{T}_{l+n}-\arrivalprime{i-1}{j}\}|n\in0\dots\infty$.
    \item $\policy{i-1}\in\dynamicpolicy$: Recall that the dynamic policy is 1-left-Lipschitz with respect to headway.  Since the lead vehicle $j-1$ departs at $\departure{i-1}{j-1}$ regardless if $j$ arrives at $\arrival{i-1}{j}$ or $\arrivalprime{i-1}{j}$, headway is only impacted by arrival time.  Ignoring dwell time, substituting $\arrival{i-1}{j} - \departure{i-1}{j-1}$ and $\arrivalprime{i-1}{j}-\departure{i-1}{j}$ for headway into the definition of a 1-left-Lipschitz function gives the result $\arrival{i-1}{j}+\holding{i-1}{j}\leq \arrivalprime{i-1}{j} + \holdingprime{i-1}{j}$.  Thus even if the dynamic policy holds a vehicle arriving later longer ($\holdingprime{i-1}{j}>\holding{i-1}{j}$), at the very latest, the vehicle will still depart $\busstop{i-1}$ at the same time as if it had arrived earlier, i.e., $\departure{i-1}{j}\leq\departureprime{i-1}{j}$.
\end{itemize}
Even if a vehicle departing at $\departure{i-1}{j}$ takes $\travelmax{i-1}{j}$, a vehicle departing at $\departureprime{i-1}{j}$ could do the same, meaning $\arrivalmax{i}{j}\leq\arrivalmaxprime{i}{j}$.
\end{proof}

\begin{theorem}
\label{thm:mingap}
The minimum time for a bunch of $M$ vehicles to pass a stop $\busstop{i}$ is a necessary condition for the maximum headway between the $j+M-1$st departure and the $j+M$th arrival for any $j\in0\dots\infty$ at that same stop.
\end{theorem}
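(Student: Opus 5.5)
We will argue by contradiction, phrased as an exchange argument on trajectories. Fix a stop $\busstop{i}$ and an index $j$. Suppose a realization (a choice of all $\dwell{\cdot}{\cdot}\in[\dwellmin{\cdot}{},\dwellmax{\cdot}{}]$ and $\travel{\cdot}{\cdot}\in[\travelmin{\cdot}{},\travelmax{\cdot}{}]$) attains the maximum headway $\headway{i}{j+M}=\arrival{i}{j+M}-\departure{i}{j+M-1}$. Because the fleet has only $M$ vehicles, the $(j+M)$th arrival at $\busstop{i}$ is made by the same physical vehicle that made the $j$th arrival, one full circuit later. Hence $\arrival{i}{j+M}$ is at most $\arrival{i}{j}$ plus the largest possible circuit time, and
\begin{equation*}
\headway{i}{j+M}=\big(\arrival{i}{j+M}-\arrival{i}{j}\big)-\big(\departure{i}{j+M-1}-\arrival{i}{j}\big).
\end{equation*}
The second bracket is exactly the bunch passage time of Definition~\ref{def:bunch}. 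The goal is to show that if this passage time is not at its minimum, we can modify the realization so that the first bracket does not shrink while the second strictly shrinks, which contradicts maximality.

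\textbf{Step 1: compress the bunch.} Suppose the bunch of $M$ vehicles takes longer than its minimum to pass $\busstop{i}$. Then at least one of the vehicles $j+1,\dots,j+M-1$ could have arrived at $\busstop{i}$ earlier. This can be arranged by giving it smaller upstream travel or dwell times, in the same way the proof of Theorem~\ref{thm:wcht=wchtt} closes a gap. It could also have departed earlier, by taking a smaller dwell time. We will show that an earlier arrival at $\busstop{i}$ cannot force a later departure. This uses the per-class argument in the proof of Lemma~\ref{lem:earlyarrival}:
\begin{itemize}
\item under $\nullpolicy$, departure is monotone in arrival directly;
\item under $\schedpolicy$, a vehicle keeps the same scheduled slot or an earlier one;
\item under $\dynamicpolicy$, the 1-left-Lipschitz condition gives $\arrival{i}{}+\holding{i}{}$ nondecreasing in arrival time.
\end{itemize}
Consequently $\departure{i}{j+M-1}$ weakly decreases, and it strictly decreases in at least one coordinate as long as the passage time exceeds its minimum.

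\textbf{Step 2: the returning vehicle is not made earlier.} We must ensure that the compression does not pull $\arrival{i}{j+M}$ earlier. Vehicle $j$ is the one that returns as arrival $j+M$. After it leaves $\busstop{i}$ we let it take $\travelmax{\cdot}{}$ and $\dwellmax{\cdot}{}$ around the whole loop. Its holding at intermediate controlled stops depends on the other vehicles' positions. Here we invoke Lemma~\ref{lem:earlyarrival} stop by stop. Holding the leader's departure fixed at each stop, a later arrival of the returning vehicle at $\busstop{i'-1}$ still permits a later arrival at $\busstop{i'}$. Inducting around the loop, the latest achievable $\arrival{i}{j+M}$ in the modified realization is at least that in the original. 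Combining this with Step 1 strictly increases $\headway{i}{j+M}$ whenever the bunch time is not minimal. Therefore minimal bunch passage time is necessary for the maximum headway.

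\textbf{Main obstacle.} The hardest step is decoupling Steps 1 and 2 under $\dynamicpolicy$ and $\schedpolicy$. Moving followers $j+1,\dots,j+M-1$ earlier changes the system state, and therefore the holding times imposed on vehicle $j$ at downstream controlled stops. A schedule slot freed by a follower could, for example, let vehicle $j$ depart sooner.

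The first approach will be to order the modifications so that the followers trail vehicle $j$ at every downstream stop. Lemma~\ref{lem:earlyarrival} is stated with the predecessor's departure fixed, so each follower's change then affects only vehicles behind it, never the returning vehicle ahead.

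If overtaking breaks this ordering, the fallback is to relabel arrivals at each stop and reapply the lemma to whichever vehicle is currently the $(j+M)$th arrival.
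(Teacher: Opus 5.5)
\textbf{There is a genuine gap in your Step~2.} The ``first approach'' in your obstacle paragraph assumes that changes to followers reach only vehicles behind them, and that is false on a loop.

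You identify arrival $j+M$ with the vehicle of arrival $j$. By the same no-overtaking identification, arrival $j+M-1$ (the bunch's last vehicle) is the vehicle of arrival $j-1$. That is the vehicle immediately \emph{ahead} of vehicle $j$ at every stop $\busstop{i'}$ of $j$'s circuit from $\busstop{i}$ back to $\busstop{i}$. Its trip into $\busstop{i}$, where you give it ``smaller upstream travel or dwell times'', consists exactly of the predecessor departures that define $j$'s headways on that circuit.

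So your compression changes precisely the quantity Lemma~\ref{lem:earlyarrival} holds fixed, and the lemma says nothing about it. The effect can go the wrong way. Take a $\dynamicpolicy$ hold $\max\{0,c-\headway{i'}{k}\}$, which is 1-left-Lipschitz. If $j$ is the $k$th arrival at $\busstop{i'}$, it departs at $\max\{\arrival{i'}{k}+\dwell{i'}{k},\,\departure{i'}{k-1}+c\}$. Moving the predecessor's departure $\Delta$ earlier then moves $j$'s departure, and hence $\arrival{i}{j+M}$, $\Delta$ earlier. This cancels the reduction in $\departure{i}{j+M-1}$, so headway does not strictly increase and your contradiction fails. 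Eq.~(\ref{eq:policy}) produces a partial cancellation of the same kind.

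A smaller hole sits inside the bunch. Under $\dynamicpolicy$, 1-left-Lipschitz only limits how fast a hold may \emph{decrease} as headway grows. So an earlier departure of $j+m$ does not by itself give a weakly earlier departure of $j+m+1$, and ``$\departure{i}{j+M-1}$ weakly decreases'' does not follow from the per-class argument of Lemma~\ref{lem:earlyarrival}.

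\textbf{Step~1 is also incomplete.} It assumes non-minimal bunch time can always be repaired on the follower side. It cannot when the followers are already at their earliest (e.g.\ pinned by $\release{0}{}$ and minimal travel and dwell) and the slack is that the leader arrived before $\arrivalmax{i}{j}$.

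That missing case is exactly the one the paper's proof treats, and treating it also removes your decoupling problem. The paper fixes $\departure{i}{j+M-1}$, and with it the trajectory of $j$'s predecessor around the loop, and perturbs only the leader's own arrival. Lemma~\ref{lem:earlyarrival} then applies as stated at every stop, chained backwards from $\arrival{i}{j+M}$ around the loop through $\busstop{0}$ to $\arrival{i}{j}$. This makes $\arrival{i}{j}=\arrivalmax{i}{j}$ necessary, and for fixed $\departure{i}{j+M-1}$ that is the same as minimal bunch time $\departure{i}{j+M-1}-\arrival{i}{j}$.

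In short, delay the leader rather than advance the followers, and your Step~2 becomes the paper's argument.
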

\begin{proof}
The headway between the last vehicle in the bunch and the next arrival is given by $\headway{i}{j+M}=\arrival{i}{j+M}-\departure{i}{j+M-1}$.
If we fix the $j+M-1$st vehicle's departure at $\departure{i}{j+M-1}$, then by Lemma~\ref{lem:earlyarrival} the latest  possible arrival $\arrivalmax{i-1}{j+M}$ is a necessary condition for $\arrivalmax{i}{j+M}$, which maximizes $\headway{i}{j+M}$.
Likewise, $\arrivalmax{i-n}{j+M}$ is a necessary condition for $\arrivalmax{i-n+1}{j+M}$ all the way back to stop $\busstop{0}$.
Assuming all $M$ vehicles were released from $\busstop{0}$ (which is a requirement of our system model), then $\arrivalmax{0}{j+M}$ corresponds to $\arrivalmax{N}{j}$.
Again, by Lemma~\ref{lem:earlyarrival}, $\arrivalmax{N-n}{j}$ is a necessary condition for $\arrivalmax{N-n-1}{j}$ all the way back to stop $\busstop{i}$.
Thus, $\arrivalmax{i}{j}$ is a necessary condition for the maximum headway to occur.
Under Definition~\ref{def:bunch}, the time taken for a bunch of $M$ vehicles to pass the stop is given by $\departure{i}{j+M-1}-\arrival{i}{j}$.
This value is minimized when $\arrival{i}{j}=\arrivalmax{j}{j}$.
\end{proof}
From Theorem~\ref{thm:mingap}, it is clear that in addition to the condition that the time for a bunch of $M$ vehicles to pass a stop is minimized, another condition exists: the lead vehicle in the bunch must always achieve the maximum arrival time ($\arrivalmax{n}{j}|i<n\leq N$ and $\arrivalmax{n}{j+M}|0\leq n\leq i$) for all stops in the route until it comes back to service $\busstop{i}$.
Together these two conditions are sufficient to guarantee $\wcht{i}$.
We now must show what conditions are required to cause the minimum bunch time to occur.

\begin{lemma}
\label{lem:nxtmingap}
Given a bunch of $v$ vehicles with leader arriving at $\busstop{i}$ at time $\arrival{i}{j}$ and last vehicle departing $\busstop{i}$ at $\departure{i}{j+v-1}$, a smaller bunch time at $\busstop{i}$ always leads to a smaller possible bunch time at $\busstop{i+1}$, i.e.:
\begin{multline}
\departure{i}{j+v-1}-\arrival{i}{j}\leq\departureprime{i}{j+v-1}-\arrival{i}{j}\implies\\\departure{i+1}{j+v-1}-\arrival{i+1}{j-1}\leq\departureprime{i+1}{j+v-1}-\arrival{i+1}{j-1}
\end{multline}
\end{lemma}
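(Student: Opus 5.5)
The plan is to hold the leader's arrival $\arrival{i}{j}$ fixed, so that comparing the two bunch times at $\busstop{i}$ reduces to comparing the last departures $\departure{i}{j+v-1}\leq\departureprime{i}{j+v-1}$. I then show that the earliest achievable last departure from $\busstop{i+1}$ is monotone in this quantity, while the leader's arrival at $\busstop{i+1}$ stays the same in both scenarios. (I read the subscript $j-1$ on the right-hand side of the statement as the leader index $j$, so the bunch time at $\busstop{i+1}$ is measured from the leader's arrival there.) The argument is the ``minimum'' mirror of Lemma~\ref{lem:earlyarrival}: there a later arrival could always be propagated to a later arrival downstream; here an earlier departure of the tail can always be propagated to an earlier possible departure of the tail downstream.

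First I fix the leader's trajectory from $\busstop{i}$ to $\busstop{i+1}$ identically in both scenarios. I choose the same dwell and the same travel time, so $\arrival{i+1}{j}$ is common to both, and choose them large (e.g.\ $\dwellmax{i}{}$ and $\travelmax{i}{}$), consistent with the worst-case leader condition following Theorem~\ref{thm:mingap}. For the tail vehicle, an earlier departure from $\busstop{i}$ combined with the same travel time $\travel{i}{}$ gives an earlier arrival at $\busstop{i+1}$ by Eq.~(\ref{eq:arrival}). Since travel times are chosen adversarially, the unprimed scenario can replicate any travel choice of the primed one. The remaining step is to push this through the holding at $\busstop{i+1}$, i.e.\ to show that the earlier-arriving tail vehicle can depart no later. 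I split into the three policy classes exactly as in the proof of Lemma~\ref{lem:earlyarrival}.
\begin{itemize}
\item Under $\nullpolicy$ the departure equals the arrival plus the dwell, which is trivially monotone.
\item Under $\schedpolicy$ an earlier arrival is assigned the same or an earlier slot $\mathbf{T}_l$ in the FIFO queue, so its departure is no later.
\item Under $\dynamicpolicy$ the 1-left-Lipschitz property with respect to headway gives $\arrival{i+1}{j+v-1}+\holding{i+1}{j+v-1}\leq\arrivalprime{i+1}{j+v-1}+\holdingprime{i+1}{j+v-1}$, provided the predecessor's departure is held fixed across the two scenarios.
\end{itemize}
Taking the maximum with the same dwell time then yields $\departure{i+1}{j+v-1}\leq\departureprime{i+1}{j+v-1}$. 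Subtracting the common leader arrival $\arrival{i+1}{j}$ gives the claim.

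The main obstacle is the intermediate vehicles of the bunch and possible overtaking. The tail index $j+v-1$ at $\busstop{i+1}$ may not be the same physical vehicle as at $\busstop{i}$. In addition, the headway fed to a dynamic policy depends on the departure of the immediately preceding vehicle, which itself depends on the scenario. To handle this, I would argue by induction over the bunch members in arrival order at $\busstop{i+1}$. The middle vehicles are given identical dwell and travel choices in both scenarios, and only the tail vehicle's times differ. An earlier tail can only move the tail's position in the order earlier, and then the last departure in the bunch is the maximum over members, each of which departs no later. This maximum-of-monotone-quantities argument is where I expect the care to be needed, particularly in verifying that the Lipschitz condition applies when the predecessor's departure is fixed across scenarios.
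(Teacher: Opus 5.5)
Your skeleton matches the paper's proof. You hold $\arrival{i}{j}$ fixed and send the leader with $\dwellmax{i}{}$ and $\travelmax{i}{}$, so that $\arrival{i+1}{j}$ is common to both scenarios; your reading of $j-1$ as $j$ agrees with the paper's proof. You then turn the earlier tail departure into an earlier tail arrival and finish with the three policy cases of Lemma~\ref{lem:earlyarrival}.

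\textbf{The gap: copying the primed travel time.} Because the leader is made as late as possible, a copied travel time can carry the earlier tail past the leader. Write $d_L$ for the leader's departure from $\busstop{i}$, so $\arrival{i+1}{j}=d_L+\travelmax{i}{}$. Suppose the primed tail leaves at $\departureprime{i}{j+v-1}=d_L+\travelmax{i}{}-\travelmin{i}{}$ and travels $\travelmin{i}{}$; it then arrives together with the leader. An unprimed tail leaving $\Delta>0$ earlier with the same $\travelmin{i}{}$ arrives at $\arrival{i+1}{j}-\Delta$ and becomes the first arrival of the bunch at $\busstop{i+1}$. So the quantity you subtract is no longer the common leader arrival, and the copied schedule can give a \emph{larger} bunch time than the primed one.

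The same reordering undermines your proposed induction over members. The 1-left-Lipschitz condition only says that $h+\policy{i+1}(h)$ is nondecreasing, which gives monotonicity in a vehicle's own arrival for a \emph{fixed} preceding departure. Once the tail changes place in the arrival order, a member's preceding departure changes. If that departure moves earlier, the member's headway grows, and a policy in $\dynamicpolicy$ may raise its hold by an arbitrary amount. So ``each member departs no later'' does not follow.

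\textbf{The missing ingredient.} Use the paper's choice of follower parameters instead of copying. In both scenarios, every follower travels $\travel{i}{j+k}=[\arrival{i+1}{j}-\departure{i}{j+k}]^{\travelmax{i}{}}_{\travelmin{i}{}}$, i.e.\ as fast as possible without passing the leader. At $\busstop{i+1}$ each follower dwells $\dwell{i+1}{j+k}=[\departure{i+1}{j}-\arrival{i+1}{j+k}]^{\dwellmax{i+1}{}}_{\dwellmin{i+1}{}}$.

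The tail is the last departure from $\busstop{i}$, so $\arrival{i+1}{j}-\departure{i}{j+v-1}\leq\travelmax{i}{}$. The upper clamp is therefore inactive, and the tail arrives at $\max\{\arrival{i+1}{j},\departure{i}{j+v-1}+\travelmin{i}{}\}$. This arrival has three useful properties:
\begin{itemize}
\item it is nondecreasing in $\departure{i}{j+v-1}$;
\item it never precedes the leader;
\item it is at least every other member's arrival, since the clamped arrival is nondecreasing in the departure time.
\end{itemize}

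With the other members unchanged, the tail therefore stays last in both scenarios, and its preceding departure at $\busstop{i+1}$ is the same in both. Also $\arrival{i+1}{j+v-1}\leq\arrivalprime{i+1}{j+v-1}$ holds regardless of the other members. Only the tail's own holding then needs to be compared through the three policy cases, exactly as in the last step of the paper's proof, and no max-over-members induction is needed.
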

\begin{proof}
We fix $\arrival{i}{j}$ and compare the bunch time at $\busstop{i+1}$ for two bunches with different final vehicle departure times $\departure{i}{j+v-1}\leq\departureprime{i}{j+v-1}$.
To minimize bunch time at $\busstop{i+1}$ we need to maximize the leader's arrival ($\arrival{i+1}{j}=\arrivalmax{i+1}{j}$) and minimize the last vehicle's departure ($\departure{i+1}{j+v-1}=\departuremin{i+1}{j+v-1}$ and $\departureprime{i+1}{j+v-1}=\departureminprime{i+1}{j+v-1}$).
$\arrival{i+1}{j}$ is maximized when dwell time at $\busstop{i}$ and travel to $\busstop{i+1}$ are maximized: $\dwell{i}{j}=\dwellmax{i}{}$ and $\travel{i}{j}=\travelmax{i}{j}$.
Note that the policy at $\busstop{i}$ may hold the vehicle for longer, but the effect is the same since we assumed the same lead vehicle arrival for both cases and holding is not affected by future arrivals.
To minimize $\departure{i+1}{j+v-1}$ and $\departureprime{i+1}{j+v-1}$ we need vehicles $j+1\dots j+v-1$ to travel to $\busstop{i+1}$ as fast as possible without overtaking the lead vehicle.
This is $\arrival{i+1}{j}-\departure{i}{j+k}$ and $\arrival{i+1}{j}-\departureprime{i}{j+k}$ for $k=1\dots v-1$, which is bounded from above and below by $\travelmax{i}{}$ and $\travelmin{i}{}$.
At this point we can see that $\arrival{i+1}{j+v-1}\leq\arrivalprime{i+1}{j+v-1}$ regardless of the positions of the other vehicles in the bunch.
Likewise, while waiting at $\busstop{i+1}$, the gap can be narrowed if $\dwell{i+1}{j+k}=\departure{i+1}{j}-\arrival{i+1}{j+k}$ and $\dwellprime{i+1}{j+k}=\departure{i+1}{j}-\arrivalprime{i+1}{j+k}$, which are of course bounded above by $\dwellmax{i+1}{}$ and below by $\dwellmin{i+1}{}$.
At this point, we see that the only way for $\departure{i+1}{j+v-1}\geq\departureprime{i+1}{j+v-1}$ is if $\holding{i+1}{j+v-1}\geq\holdingprime{i+1}{j+v-1}+\departureprime{i}{j+v-1}-\departure{i}{j+v-1}$.
Considering the same three cases as in Lemma~\ref{lem:earlyarrival}, however, shows us that this is impossible: under all three policy classes, a later arrival at the stop necessitates a later departure.
\end{proof}

In Lemma~\ref{lem:nxtmingap} we not only showed that a smaller bunch time at a preceding stop is a necessary condition for a smaller bunch time at a following stop, we showed the travel times and dwell times required to achieve it for each vehicle in the bunch.
We can now describe how to calculate the smallest bunch time at any given stop for a bunch of $M$ vehicles.

\begin{theorem}
\label{thm:wcht}
For a bunch of $M$ vehicles, bunch time is minimized at $\busstop{i+1}$ when, starting at $\busstop{0}$, the lead vehicle in the bunch $j$ dwells with $\dwell{i}{j}=\dwellmax{i}{}$ and travels with $\travel{i}{j}=\travelmax{i}{}$ for all stops and the subsequent $M-1$ vehicles always travel with parameters $\travel{i}{j+m}=[\arrival{i+1}{j}-\departure{i}{j+m}]^{\travelmax{i}{}}_{\travelmin{i}{}}$ and $\dwell{i+1}{j+m}=[\departure{i+1}{j}-\arrival{i+1}{j+m]}]^{\dwellmax{i+1}{}}_{\dwellmin{i+1}{}} \forall m\in1\dots M; j\in0,M,2M,\dots\infty; i\in0\dots N$.
\end{theorem}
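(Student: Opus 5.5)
I will prove the statement by induction on the stop index, using Lemma~\ref{lem:nxtmingap} as the inductive step and Lemma~\ref{lem:earlyarrival} to pin down the leader's trajectory. The invariant I will carry is that, under the stated assignment of dwell and travel times, the bunch time $\departure{i}{j+M-1}-\arrival{i}{j}$ at every stop $\busstop{i}$ is the smallest one achievable. Along the way the leader's arrival will equal $\arrivalmax{i}{j}$.

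\textbf{Base case.} At $\busstop{0}$ the release times $\release{0}{j}$ are fixed by the system model. I will check that the prescribed clamped travel and dwell choices for the followers minimize the bunch time after the first dwell and travel segment. Each follower $j+m$ wants to arrive at $\busstop{1}$ as close as possible to the leader's arrival $\arrivalmax{1}{j}$. The choice $\travel{0}{j+m}=[\arrival{1}{j}-\departure{0}{j+m}]^{\travelmax{0}{}}_{\travelmin{0}{}}$ is exactly the projection of the ideal catch-up time onto the feasible interval. Since the travel-time constraint is a box, this projection is optimal coordinatewise.

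\textbf{Inductive step.} Assume that at $\busstop{i}$ the leader arrives at $\arrivalmax{i}{j}$ and the last follower departs at its earliest possible time given that leader. There are three parts.
\begin{enumerate}
\item By Lemma~\ref{lem:earlyarrival}, the leader's latest arrival at $\busstop{i+1}$ comes from its latest arrival at $\busstop{i}$ combined with $\dwell{i}{j}=\dwellmax{i}{}$ and $\travel{i}{j}=\travelmax{i}{}$. Holding only increases this departure, and holding is unaffected by the followers' later arrivals.
\item By Lemma~\ref{lem:nxtmingap}, a smaller bunch time at $\busstop{i}$ yields a smaller achievable bunch time at $\busstop{i+1}$. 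So it suffices to minimize the followers' departures from $\busstop{i+1}$ given their departures from $\busstop{i}$.
\item The proof of Lemma~\ref{lem:nxtmingap} already identifies the minimizers: followers travel so as to arrive just when the leader arrives, and dwell so as to depart just when the leader departs, each clamped to its feasible range. These are precisely the expressions $[\arrival{i+1}{j}-\departure{i}{j+m}]^{\travelmax{i}{}}_{\travelmin{i}{}}$ and $[\departure{i+1}{j}-\arrival{i+1}{j+m}]^{\dwellmax{i+1}{}}_{\dwellmin{i+1}{}}$.
\end{enumerate}
To close the step I will use monotonicity once more. Under all three policy classes a later follower arrival cannot produce an earlier departure, which is the same case analysis as in Lemma~\ref{lem:earlyarrival}. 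Hence clamping each follower toward the leader never hurts a subsequent follower. This is the point where 1-left-Lipschitzness of $\dynamicpolicy$ and the FIFO behavior of $\schedpolicy$ enter.

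\textbf{Wrap-around.} After $\busstop{N}$ the vehicles return to $\busstop{0}$, where the next block of $M$ arrivals is indexed $j+M,\dots,j+2M-1$. This is the reason for the restriction $j\in 0,M,2M,\dots$. The same invariant carries through the re-indexing exactly as in the proof of Theorem~\ref{thm:mingap}, where $\arrivalmax{0}{j+M}$ corresponds to $\arrivalmax{N}{j}$. The induction therefore runs indefinitely around the loop, and in particular gives the minimal bunch time at $\busstop{i+1}$.

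\textbf{Main obstacle.} The hardest step is justifying that the greedy, stop-by-stop clamped choice is globally optimal rather than merely locally so. A follower could in principle deliberately lag early in order to trigger a smaller hold later under $\dynamicpolicy$ or $\schedpolicy$. My first approach is an exchange argument. Take any schedule that deviates from the clamped choice at some stop and replace the deviating time with the clamped one. This makes that follower's arrival (and hence, by the monotonicity above, its departure) weakly earlier at every subsequent stop. Lemma~\ref{lem:earlyarrival} applied to the follower shows the replacement never increases any later departure. If overtaking within the bunch complicates the index bookkeeping, I would instead argue on the sorted departure times at each stop, since bunch time depends only on the first arrival and the last departure.
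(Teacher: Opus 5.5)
Your skeleton matches the paper's: a trivial base case at $\busstop{0}$ from the fixed release times, Lemma~\ref{lem:nxtmingap} with the clamped minimizers from its proof as the inductive step, and re-indexing by $M$ after $\busstop{N}$. What you omit is the step the paper says is still needed after Lemma~\ref{lem:nxtmingap}. That lemma only compares two histories of the \emph{same} bunch with the same leader arrival $\arrival{i}{j}$. One must also exclude that a different window of $M$ consecutive arrivals, with leader $j+x$ for some $x>0$, attains a smaller bunch time at $\busstop{i+1}$. The paper handles this by contradiction: if $\departure{i+1}{j+M-1+x}-\arrival{i+1}{j+x}<\departure{i+1}{j+M-1}-\arrival{i+1}{j}$, then Lemma~\ref{lem:nxtmingap} would force the corresponding strict inequality at $\busstop{i}$, contradicting minimality there. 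Your invariant and wrap-around paragraph only concern the aligned blocks $j, j+M, j+2M,\dots$. So at best you show that the prescribed trajectory minimizes that block's time, not that it realizes the minimum over all bunches. The latter is what Theorem~\ref{thm:mingap}, stated for arbitrary $j$, needs.

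Second, the exchange argument you rely on for global optimality does not work as stated, and the paper does not use it.

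If a deviating schedule uses a travel time below the clamp, the follower overtakes the leader. Substituting the clamped value then makes its arrival \emph{later}, so your monotone chain never starts. Overtaking is also not mere index bookkeeping: it moves the first arrival, which is the reference point of bunch time, earlier.

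More seriously, Lemma~\ref{lem:earlyarrival} gives monotonicity in a vehicle's own arrival only with its predecessor's departure held fixed. Making one follower depart earlier enlarges the next vehicle's headway, and the 1-left-Lipschitz assumption on $\dynamicpolicy$ only limits how fast the hold may \emph{decrease} as headway grows. For instance, take $\dwellmax{k}{}=0$ and $\policy{k}(h)=\min\{2h,\Phi\}$, which is nondecreasing and hence admissible. Then a predecessor departing $\Delta$ earlier can delay the next vehicle's departure by up to $2\Delta$.

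Holds may also depend on followers' positions (Eq.~(\ref{eq:policy}) uses $\estimatedarrival{i}{j+1}$), so changing a follower can shift the leader's departure as well. Thus ``the replacement never increases any later departure'' does not follow. The paper instead treats the monotone propagation of bunch time in Lemma~\ref{lem:nxtmingap} as the whole justification that stop-by-stop minimization is globally optimal. It adds only the cross-window contradiction described above.
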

\begin{proof}
We start with the base case where vehicles are released from $\busstop{0}$ at $\release{0}{0}\dots\release{0}{M-1}$.
Since there is only one possible departure time for each vehicle, this case only has one possible bunch time, and is therefore minimized.
We now show that for a bunch of $M$ vehicles with the first vehicle arriving at $\busstop{i}$ at $\arrival{i}{j}$ and the last vehicle departing at $\departure{i}{j+M-1}$, minimum bunch size at $\busstop{i+1}$ can only occur if $\departure{i}{j+M-1}-\arrival{i}{j}$ is already minimized.
By Lemma~\ref{lem:nxtmingap}, we know this is true for the same bunch of $M$ vehicles, however, we still need to show that this is true for all future bunches of $M$ vehicles.
We prove this by contradiction: assume that for a given $\arrival{i}{j}$, there is some $\departure{i+1}{j+M-1+x}-\arrival{i+1}{j+x}<\departure{i+1}{j+M-1}-\arrival{i+1}{j}$ for $x>0$.
If this was the case, then by Lemma~\ref{lem:nxtmingap} it must also hold true that $\departure{i}{j+M-1+x}-\arrival{i}{j+x}<\departure{i}{j+M-1}-\arrival{i}{j}$, however, this would violate the assumption that $\departure{i}{j+M-1}-\arrival{i}{j}$ is minimized: a contradiction.
Since the bunch size at the start is minimized in the base case, we can use the travel and dwell parameters computed in Lemma~\ref{lem:nxtmingap} to recursively calculate the minimum bunch size at $\busstop{1}\dots\busstop{N}$.
After $\busstop{N}$, the vehicles travel back to $\busstop{0}$ and begin traveling the route again.
Now that all the stops have been visited by a vehicle at least once, headway is computable.
Minimum bunch time is now calculated for the bunch where the lead vehicle is the $M$th arrival and the last vehicle is the $2M-1$st arrival at each stop.
By our induction hypothesis, this continues for all bunches of $M$ vehicles for all $N$ stops with the lead vehicle arriving at $M,2M,\dots\infty$.
\end{proof}

Using Theorem~\ref{thm:wcht}, we can now compute the  headway upper bound.
We note that the travel and dwell times taken by the vehicles in Theorem~\ref{thm:wcht} are sufficient to cause the maximum headway to occur, but not always necessary: the same result can sometimes be reached with different travel and dwell parameters.
While it is not true in general that a smaller bunch time of $M$ vehicles implies a lower minimum headway time across all $M$ vehicles in a bunch (e.g., all vehicles in the bunch are evenly spaced instead of two successive vehicles having zero headway), we now show that by following the travel and dwell times prescribed by Theorem~\ref{thm:wcht}, the lower bound on headway time must occur.

\begin{theorem}
\label{thm:bcht}
The headway lower bound, $\bcht{i}$ occurs at $\busstop{i}$, when, starting at $\busstop{0}$, the lead vehicle $j$ in a bunch of $M$ vehicles dwells with $\dwell{i}{j}=\dwellmax{i}{}$ and travels with $\travel{i}{j}=\travelmax{i}{}$ for all stops and the subsequent $M-1$ vehicles always travel with parameters $\travel{i}{j+m}=[\arrival{i+1}{j}-\departure{i}{j+m}]^{\travelmax{i}{}}_{\travelmin{i}{}}$ and $\dwell{i+1}{j+m}=[\departure{i+1}{j}-\arrival{i+1}{j+m]}]^{\dwellmax{i+1}{}}_{\dwellmin{i+1}{}} \forall m\in1\dots M; j\in0,M,2M,\dots\infty; i\in0\dots N$.
\end{theorem}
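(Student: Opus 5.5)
The idea is to reduce the headway lower bound to the minimum bunch time already characterized in Theorem~\ref{thm:wcht}. Under the prescribed schedule, the followers are driven toward the leader as hard as the bounds allow. Each follower's travel time is clamped toward $\arrival{i+1}{j}-\departure{i}{j+m}$, and each follower's dwell time is clamped toward $\departure{i+1}{j}-\arrival{i+1}{j+m}$. These are exactly the values that would make follower $j+m$ arrive at, or depart from, $\busstop{i+1}$ simultaneously with the leader.

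\textbf{Step 1: headways inside the bunch.} I would decompose the bunch time at $\busstop{i}$ as a telescoping sum:
\begin{equation*}
\departure{i}{j+M-1}-\arrival{i}{j}=(\departure{i}{j}-\arrival{i}{j})+\sum_{m=1}^{M-1}\left(\headway{i}{j+m}+\departure{i}{j+m}-\arrival{i}{j+m}\right).
\end{equation*}
This identity holds whenever the vehicles keep their order, and the clamping guarantees they do, since followers never overtake the leader. From it I would argue that the prescribed parameters minimize each consecutive headway, not just their sum. The key observation is that the clamp for follower $j+m$ is computed only relative to the leader $j$. So for each $m$, follower $j+m$ is individually pushed to be as close to $j$ as possible, subject to its own history. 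By Lemma~\ref{lem:nxtmingap} and Lemma~\ref{lem:earlyarrival}, a closer position at $\busstop{i}$ can only yield a closer position at $\busstop{i+1}$. Because the followers are driven independently toward the same target, no follower is disadvantaged by the others' choices.

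\textbf{Step 2: the two cases.} I would first handle the case where some follower's clamp is inactive at some stop. Then it reaches the leader exactly, so $\headway{i}{j+m}=0$, which is trivially the lower bound. If every clamp stays active, with followers always taking $\travelmin{i}{}$ and $\dwellmin{i}{}$, then each follower is as early as physically possible. Any other admissible schedule has $\arrival{i}{j+m}$ at least as late. The adjacent headway $\headway{i}{j+m}=\arrival{i}{j+m}-\departure{i}{j+m-1}$ is then minimized because, again by Lemma~\ref{lem:earlyarrival}, the preceding follower's departure is the latest compatible with its own prescribed trajectory. The headway of the leader itself, relative to the previous bunch, is large by Theorem~\ref{thm:mingap}, so it cannot be where the minimum occurs.

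\textbf{Step 3: extending over all bunches.} I would extend the argument to all later bunches $j\in M,2M,\dots$ by the same induction used in Theorem~\ref{thm:wcht}. The base case is the fixed releases at $\busstop{0}$. The inductive step is the stop-to-stop monotonicity argument above. Taking the minimum over $j$ of $\headway{i}{j}$ then gives $\bcht{i}$.

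\textbf{Main obstacle.} The hardest step is Step 1: showing that minimizing the total bunch time also minimizes the smallest individual gap. The concern is a holding policy in $\dynamicpolicy$ that holds the middle followers and thereby spreads them out. My first attempt would be to use the 1-left-Lipschitz property to show that holding one follower longer never lets the next follower's gap fall below what it is under the prescribed schedule. If that fails, I would fall back to a pairwise argument: apply Lemma~\ref{lem:nxtmingap} with $v=2$ to each consecutive pair $(j+m-1,j+m)$, treating $j+m-1$ as the leader of a two-vehicle bunch.
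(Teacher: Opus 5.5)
There is a genuine gap, and it is the step you flagged as the main obstacle. The claim in Step~1 that the prescribed parameters minimize \emph{every} consecutive headway inside the bunch is false.

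The prescription pushes each follower individually toward the leader $j$, so what it minimizes is each follower's lag behind $j$. For $m\ge 2$, however, $\headway{i}{j+m}=\arrival{i}{j+m}-\departure{i}{j+m-1}$ is essentially a difference of two such lags, and minimizing both terms of a difference does not minimize the difference. Concretely, take a null-policy stop $\busstop{i}$ with $\dwellmax{i}{}>\dwellmin{i}{}$. If follower $j+1$ arrives after $\departure{i}{j}$, the clamp gives it dwell $\dwellmin{i}{}$. Letting it dwell $\dwellmax{i}{}$ instead leaves $\arrival{i}{j+2}$ unchanged and lowers $\headway{i}{j+2}$ by up to $\dwellmax{i}{}-\dwellmin{i}{}$. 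In the null-policy version of the paper's five-stop example, $\headway{1}{2}$ drops from $10$ to $9.5$.

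Your Step~2 justification, that ``the preceding follower's departure is the latest compatible with its own prescribed trajectory,'' has the direction backwards. The only vehicle whose departure is pushed late is the leader; every other $j+m-1$ is pushed early. The same obstruction defeats your fallback for $m\ge 2$: treating $j+m-1$ as the leader of a two-vehicle bunch would require it to take $\dwellmax{i}{}$ and $\travelmax{i}{}$, contradicting its clamped parameters. Moreover, Lemma~\ref{lem:nxtmingap} with $v=2$ controls the bunch time $\departure{i}{j+1}-\arrival{i}{j}$, not the headway $\arrival{i}{j+1}-\departure{i}{j}$. Your Step~2 case split also omits clamps saturating at the upper end, which is the overtaking case.

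The paper's proof never goes through bunch time; it remarks just before the theorem that a smaller bunch time need not give a smaller minimum gap. Instead it runs a stop-by-stop induction on a single pair: a leader $j-1$ and its immediate follower $j$. The base case is vehicles $0$ and $1$ leaving $\busstop{0}$ at their release times and arriving at $\busstop{1}$. In the inductive step, the leader's $\dwellmax{i}{}$ and $\travelmax{i}{}$ make $\arrival{i+1}{j-1}$ as late as possible, and by Lemma~\ref{lem:earlyarrival} also $\departure{i+1}{j-1}$. Meanwhile, the follower's dwell clamp toward $\departure{i}{j-1}$ and travel clamp toward $\arrival{i+1}{j-1}$ make $\arrival{i+1}{j}$ as early as possible without overtaking; an overtake would merely relabel the pair without improving $\arrival{i+1}{j}$. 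Latest leader departure against earliest follower arrival gives the smallest headway. So the bound is realized at arrivals $1,M+1,2M+1,\dots$, which is why Algorithm~\ref{alg:wcht} records $\max\{0,a_1-d_0\}$.

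To repair your proposal, drop the telescoping reduction and the claim about all internal gaps. Restrict your pairwise fallback to $m=1$, using Lemma~\ref{lem:earlyarrival} and the clamps directly rather than Lemma~\ref{lem:nxtmingap}. Then place the minimum at the first follower of each bunch.
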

\begin{proof}
To minimize $\bcht{i}$, we want $\departure{i}{j-1}=\departuremax{i}{j-1}$ and $\arrival{i}{j}=\arrivalmax{i}{j}$.
The base case is the departure of the $0$th vehicle and the arrival of the $1$st vehicle at $\busstop{1}$. Under the travel conditions, listed above, it is clear that $\departure{1}{0}$ is maximized and $\arrival{1}{1}$ is minimized since $\travel{0}{0}=\travelmax{0}{}$ and $\travel{0}{1}=[\arrival{1}{0}-\departure{0}{1}]^{\travelmax{0}{}}_{\travelmin{0}{}}$ will clamp the difference between these values as close to $0$ as possible.
The induction hypothesis is that the smallest possible headway for the $j$th arrival at $\busstop{i}$ leads to the smallest possible headway for the $j$th vehicle at $\busstop{i+1}$ under the given travel conditions.
It is clear that $\arrival{i+1}{j-1}=\departure{i}{j-1}+\travelmax{i}{}$ is the latest possible arrival given the starting condition and any other travel time cannot lead to a later departure $\departure{i+1}{j-1}$ under Lemma~\ref{lem:earlyarrival}.
Meanwhile, the $j$th arrival at $\busstop{i}$ dwells for $\dwell{i}{j}=[\departure{i}{j-1}-\arrival{i}{j]}]^{\dwellmax{i}{}}_{\dwellmin{i}{}}$, which results in its $\departure{i}{j}$ being as close to $\departure{i}{j-1}$ as possible without overtaking.  The travel condition $\travel{i}{j}=[\arrival{i+1}{j-1}-\departure{i}{j}]^{\travelmax{i}{}}_{\travelmin{i}{}}$ then minimizes $\arrival{i+1}{j}$ given that $\arrival{i+1}{j-1}$ is fixed (if an overtake were to occur, the $j$th departure would become the $j-1$st arrival, so traveling at $\travelmin{i}{}$, would not improve $\arrival{i+1}{j}$).
Since $\departure{i+1}{j-1}$ is maximized and $\arrival{i+1}{j}$ is minimized, $\headway{i+1}{j}$ will be minimized given the initial condition.
From the base case, this is minimized for $\headway{1}{1}$.
\end{proof}

In Theorem~\ref{thm:bcht}, we see that the lower bound on headway can be calculated using a similar process to the upper bound but instead focusing on arrival of the $1,M+1,2M+1,\dots$ vehicle at every stop.
This means $\wcht{i}$ and $\bcht{i}$ can be calculated simultaneously.

\subsection{Computing Headway Bounds}
Algorithm~\ref{alg:wcht} uses Theorem~\ref{thm:wcht} and Theorem~\ref{thm:bcht} to calculate the headway upper and lower bounds.
It requires the stop parameters ($\travelmin{i}{}, \travelmax{i}{}, \dwellmin{i}{}, \dwellmax{i}{}, \policy{i}$) for all $N$ stops in a route.
To satisfy the base case in Theorem~\ref{thm:wcht} and Theorem~\ref{thm:bcht}, the initial release times $\release{0}{j}$ of all $M$ vehicles are required as well.
$i$ is used as an index to track the current stop for which headway time is being calculated (initialized in Line 1 and updated in Line 14).
Initially, $\wcht{i}$ is set to $-\infty$ for all stops, since it has not yet been computed (Line 2).
A negative value indicates no valid headway while $\wcht{i}\geq0$ indicates the worst possible headway time at $\busstop{i}$ up to the time of the latest considered departure.
Likewise, $\bcht{i}$ is initially set to $\infty$ for all stops (Line 3) and decreases as the loop progresses.
We use $\carryin{i}$ to track the departure time of the last bus at each stop $\busstop{i}$.
This is initialized to undefined for each stop since at the start of the route, no buses have departed (Line 4).
Two vectors $\mathbf{a}$ and $\mathbf{d}$ track the latest arrivals and departures for all $M$ buses at the stop $i$ under consideration.
We know the initial departures for all buses (corresponding to the base case in Theorem~\ref{thm:wcht}), so $\mathbf{d}$ is initialized to release times $\release{0}{0}\dots\release{0}{M}$ (Line 5).
Following Theorem~\ref{thm:wcht}, we loop through the successive stops calculating arrivals and departures given that buses travel with $\travel{i}{}$ and $\dwell{i}{}$ determined by Lemma~\ref{lem:nxtmingap}.
We use the variable $\mathtt{converged}$ (initially set to false in Line~6) to track the convergence on headway bounds.
Inside the loop, the first element of $\mathbf{a}$ (corresponding to the first arrival at $\busstop{i}$) receives the earliest previously tracked departure plus the maximum travel time (Line 9).
For the remaining vehicles the travel time is computed to be as fast as possible, without over taking the lead (Lines 10--12).
Next, we clear the previous departure values (Line 13) and update the stop index (Line~14).
The lead vehicle's dwell time is set as long as possible (Line~15), and the remaining vehicles are set to dwell such that the times between their departures and the lead vehicle's are minimized (Line 17).
Note that calculation of $\holding{i}{}$ may require additional variables not tracked here, for example, a stop controlled by $\schedpolicy$ will require the current time since start of service and a stop controlled by $\dynamicpolicy$ will require an estimate of the next vehicle's arrival time $\estimatedarrival{i}{j+1}$.
This estimate needs to be for the earliest possible arrival of the subsequent vehicle so that the bunch time will shrink and the algorithm captures the worst-case behavior.
Once departure times have been calculated, we can update our headway bounds: if a stop has been previously visited ($\carryin{i}$ exists -- Line 19) and the $\wcht{i}$ is larger than previously observed (Line 20), we record it (Line 21).
Convergence of $\wcht{i}$ depends only on bunch time: the same bunch time of $M$ vehicles always leads to the same next minimum bunch time, so if there is no change, it is possible that our algorithm has converged (Line 23).
We also update $\bcht{i}$ if this is lower than previously observed (Line 24).
It is possible that $\bcht{i}$ converges slower than $\wcht{i}$ since the positions of vehicles in a bunch can continue changing even if bunch time remains unchanged, thus an update to $\bcht{i}$ causes us to unset the convergence flag (Line~26).
Before considering the next stop, we set $\carryin{i}$ to the departure of the last vehicle, which will be used to calculate $\wcht{i}$ when the bunch visits $\busstop{i}$ again (Line 27).

\begin{algorithm}
\caption{Dynamic program to calculate $\wcht{i}$ and $\bcht{i}$.}\label{alg:wcht}
\KwData{Stop params $\travelmin{i}{},\travelmax{i}{},\dwellmin{i}{},\dwellmax{i}{},\policy{i}\forall i\in0\dots N$\\Vehicle release times $\release{0}{j} \forall j\in 0\dots M$}
\KwResult{$\wcht{i},\bcht{i}\forall i\in0\dots N$}
$i\gets1$;\\
$\wcht{i}\gets-\infty\forall i\in0\dots N$;\\
$\bcht{i}\gets\infty\forall i \in0\dots N$;\\
$\carryin{i}\gets \mathtt{undef.}\forall i\in0\dots N$;\\
$d_j\gets \release{0}{j}\forall j\in0\dots M$;\\
$\mathtt{converged}\gets \bot$;\\
\While{$\lnot\mathtt{converged}$}{
    $\mathbf{a}\gets[]$;\\
    Push $d_0 + \travelmax{i}{}$ to $\mathbf{a}$;\\
    \For{$j\in 1\dots M$}{
        Push $d_j + [a_{j-1}-d_j]^{\travelmax{i}{}}_{\travelmin{i}{}}$ to $\mathbf{a}$;\\
    }
    $\mathbf{d}\gets[]$;\\
    $i \gets (i + 1)\mod N$;\\
    Push $a_0 + \max\{\dwellmax{i}{},\holding{i}{}\}$ to $\mathbf{d}$;\\
    \For{$j\in 1\dots M$}{
        Push $a_j + \max\{[d_{j-1}-a_j]^{\dwellmax{i}{}}_{\dwellmin{i}{}}, \holding{i}{}\}$ to $\mathbf{d}$;\\
    }
    \uIf{$\exists \carryin{i}$}{
        \uIf{$a_0-\carryin{i}>\wcht{i}$}{
            $\wcht{i}\gets \max\{0,a_0-\carryin{i}\}$;\\
        }
        \uElse{
            $\mathtt{converged}\gets\top$;\\
        }
    }
    \uIf{$\max\{0, a_1-d_0\}<\bcht{i}$}{
        $\bcht{i}\gets\max\{0,a_1-d_{0}\}$;\\
        $\mathtt{converged}\gets\bot$;\\
    }
    $\carryin{i}\gets d_M$;\\
}
\end{algorithm}

We note that Algorithm~\ref{alg:wcht} is not guaranteed to converge in a finite amount of time.
Consider, for example, a stop controlled by $\policy{i}\in\dynamicpolicy$, where holding times are calculated based on headway such that $\holding{i}{j}=\headway{i}{j}/t$ (where $t$ is time since service start) and $\dwellmin{i}{}=\dwellmax{i}{}=0$.
This policy still respects all the requirements of $\dynamicpolicy$, but the bunch time at $\busstop{i}$ can only approach $0$: it will never reach it in a finite time window.
In practice however, a time cutoff can be introduced since most routes do not serve stops continuously and pause at night or on the weekends.
Since convergence is not provable, we consider the time complexity to update the bounds for one stop, which is $O(NM)$ because arrivals and departures must be calculated for $M$ vehicles at each of the $N$ stops leading to the update.
Memory complexity is linear with respect to number of stops and vehicles ($O(NM)$). The vectors of arrivals and departures ($\mathbf{a}$ and $\mathbf{d}$) never exceed length $M$ and are freed and reallocated on each iteration of the loop.
The last departure for each stop, $\omega$, is also required for all $N$ stops.

\section{Experiments}
To build intuition on the effects of different holding policies on headway bounds, we first demonstrate our algorithm on an example bus route with $5$ stops.
All code for the following sections is publicly available on GitHub\footnote{https://github.com/smarttransit-ai/bus-transit-headway-bounds}.
The travel and dwell time parameters for each stop are $\travel{i}{}\in[4,4.5]$ and $\dwell{i}{}\in[0,0.5]$, with units given in minutes.
This yields a minimum route completion time of 20 min. and a maximum route completion time of 25 min. in the uncontrolled case, making it easier to link the computed headway times to other critical timing parameters.
We compare the following policies:
\begin{itemize}
    \item $\nullpolicy$ -- $\holding{i}{j}=0\forall i\in0\dots N,j\in 0\dots\infty$.
    \item $\schedpolicy$ -- We set scheduled departure times at intervals of the desired headway ($10$ min.) and add an offset for the maximum time it takes the lead bus to reach a stop after beginning the route.  For example, the schedule list at $\busstop{1}$ would be $\mathbf{T}=\langle5,15,25,\dots\rangle$, since the maximum time to reach $\busstop{1}$ is 5 minutes ($\travelmax{i}{}+\dwellmax{i}{}$). This offset is chosen to prevent the worst-case scenario for a schedule-driven policy outlined in Note~\ref{note:schedpolicy}.
    \item $\dynamicpolicy$ -- We use the same headway-based dynamic holding policy as Zhou et al.~\cite{zhou_a_2022}, adapting their notation to match our system model in Eq.~(\ref{eq:policy}).
\begin{equation}
\label{eq:policy}
\holding{i}{j}=
\begin{cases}
\Bigg[\frac{1}{2}(\estimatedarrival{i}{j+1} + \departure{i}{j-1} - 2\arrival{i}{j})\Bigg]^\Phi_0 & \text{if} \quad\frac{\headway{i}{j}}{\headway{i}{j+1}} \leq R\\
0 & \text{else}
\end{cases}
\end{equation}
    In addition to the arrivals and departures of the surrounding vehicles, the policy is parameterized by $\Phi$, the maximum amount of holding that can be applied, and $R$, the headway to tailway ratio needed to activate the policy.  These parameters are desirable for transit operators as they bound holding times from above and below.
\end{itemize}

\textbf{Holding policies at every stop} --
We begin by examining a route where all $5$ stops have an identical holding policy applied, i.e., all stops governed by either $\nullpolicy$, $\schedpolicy$, or $\dynamicpolicy$, in Fig.~\ref{fig:all_stops}. 
The route is served by $3$ buses released from $\busstop{0}$ at times $0$, $10$, and $20$ min.
Since the maximum completion time of the uncontrolled route is $30$ min., three buses should be sufficient to maintain a headway of 10 min. at each stop.
In Fig.~\ref{fig:all_stops}, the hatched bars represent the upper and lower headway time bounds and the box plots represent the headways actually observed in a probabilistic simulation of 10,000 trips along the route with travel and dwell times drawn uniformly at each stop.
As predicted in Theorem~\ref{thm:wcht=wchtt}, when there is no control policy, worst case bunching occurs and $\wcht{i}$ approaches the maximum route completion time while $\bcht{i}$ approaches zero.
Furthermore, we observe that the statistical simulation also approaches this bound.
When all stops are scheduled so that the driver has to hold at a stop if they arrive early, we see that headway times are maintained around 10 minutes with tight variance, except for $\busstop{0}$.
This is because the schedule release times at $\busstop{4}$ are $\langle20, 30, 40\dots\rangle$ while at $\busstop{0}$ they are at $\langle30, 40, 50,\dots\rangle$.
In the average case, buses will reach $\busstop{0}$ around $5$ minutes after their predecessor left.
We observe that despite the relatively tight statistical distribution of headway, maintaining a schedule can incur up to $20$ min. of headway ($\wcht{i}$) at all stops other than $\busstop{0}$ due to follower buses in a bunch becoming unacceptably delayed.
For $\dynamicpolicy$ we set $R=0.75$ and $\Phi=30$.
In this case, the mean headway time is less than that of the schedule driven route although the average variance at each stop is higher.
For all stops excluding $\busstop{0}$, the headway upper and lower bounds are less than $\schedpolicy$ meaning that the $\dynamicpolicy$ is able keep buses evenly spaced allowing headway to approach the average travel time divided by number of buses, which is lower than the $\schedpolicy$'s strictly enforced $10$ min.

\begin{figure}
    \centering
    \includegraphics[width=1\linewidth,trim={1.5cm 0.25cm 3cm 2cm},clip]{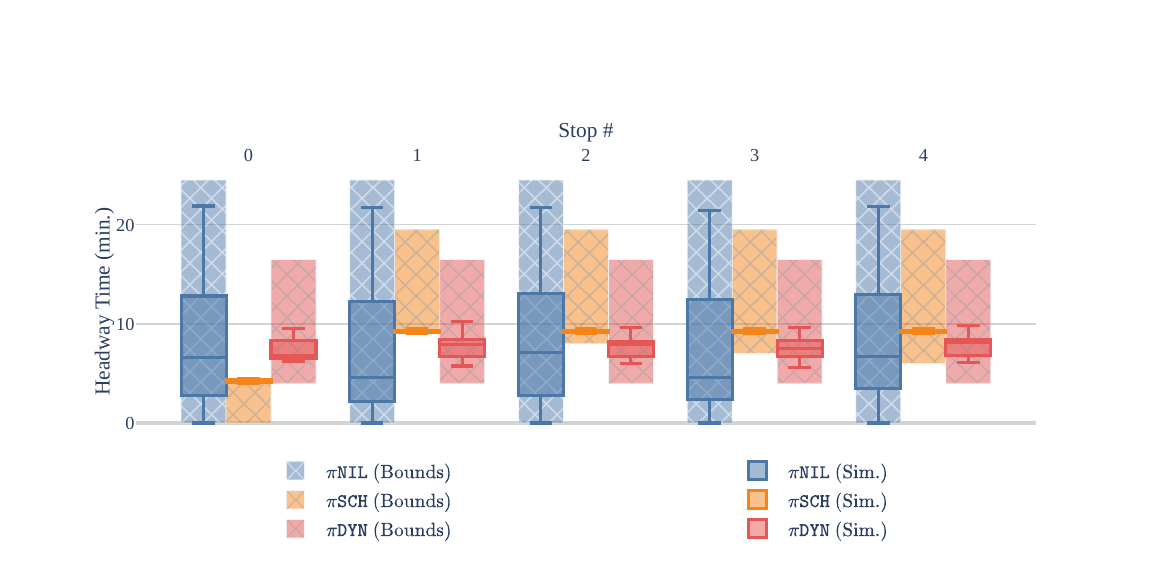}
    \caption{Simulated headway measurements (boxes) and headway bounds (hatched bars) for a $5$ stop route with all stops controled by $\nullpolicy$ (blue), $\schedpolicy$ (orange), or $\dynamicpolicy$ with $R=0.75$, $\Phi=30$ (red).}
    \label{fig:all_stops}
\end{figure}

\textbf{Holding policies at one stop} --
Some works argue that a single control point is sufficient to regulate headway~\cite{eberlein_the_2001}, which we analyze in Fig.~\ref{fig:first_stop} for holding only applied at $\busstop{0}$ using the same $\schedpolicy$ and $\dynamicpolicy$ parameters as Fig.~\ref{fig:all_stops}.
We see that for $\schedpolicy$, the bounds are equivalent to when all stops are controlled by $\schedpolicy$, but that the variance of the simulated results is now greater.
However, for the $\dynamicpolicy$ route, $\bcht{i}$ is lower for all stops, meaning $2$ of the $3$ vehicles are approaching each other.
Despite this, $\wcht{i}$ remains unchanged meaning one stop is sufficient to maintain separation.

\begin{figure}
    \centering
    \includegraphics[width=1\linewidth,trim={1.5cm 0.25cm 3cm 2cm},clip]{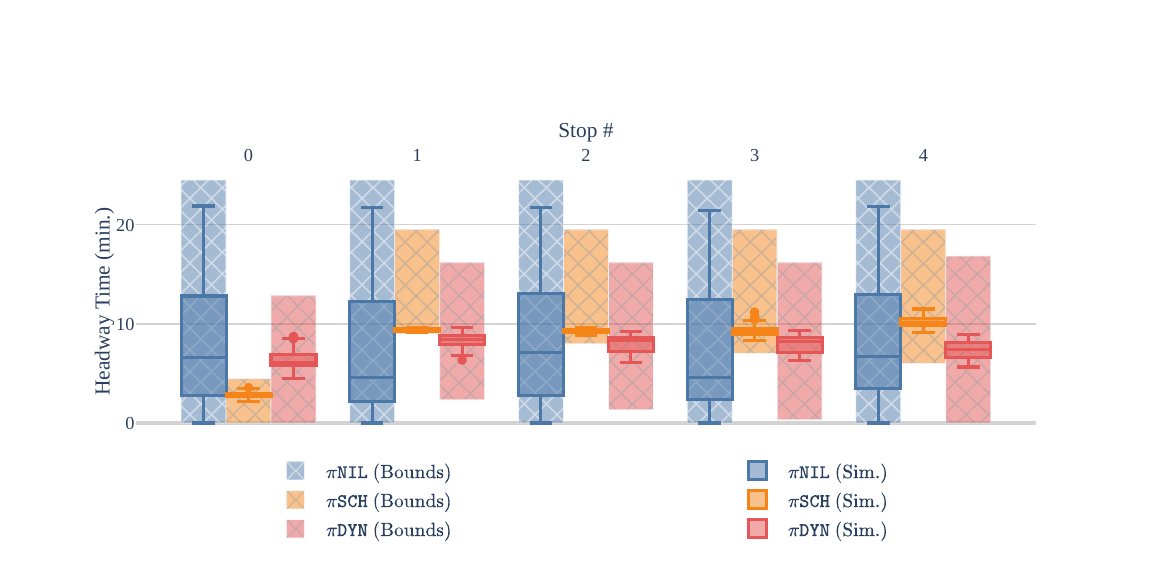}
    \caption{Simulated headway measurements (boxes) and headway bounds (hatched bars) for a $5$ stop route with all stops governed by $\nullpolicy$ (blue) except for $\busstop{0}$, which can be governed by $\schedpolicy$ (orange) or $\dynamicpolicy$ with $R=0.75$, $\Phi=30$ (red).}
    \label{fig:first_stop}
\end{figure}

\textbf{Combinations of holding policies} --
Other authors argue that multiple control points lead to better performance~\cite{chen_implementation_2013}.
We investigate this claim and also compare with the case when different strategies are combined on the same route.  In Fig.~\ref{fig:2stops}, $\busstop{0}$ and $\busstop{2}$ are control points and we consider (1) both stops controlled by $\schedpolicy$, (2) both stops controlled by $\dynamicpolicy$, (3) one stop controlled by $\schedpolicy$ ($\busstop{2}$) and one stop controlled by $\dynamicpolicy$ ($\busstop{0}$).
The policy parameters are the same as Fig.~\ref{fig:all_stops}.
Case (1) maintains control similar to its counterparts in Figs.~\ref{fig:all_stops} and~\ref{fig:first_stop} indicating that more scheduled stops are not necessary to increase performance in this case.
In case (2) we see that by adding a control point midroute, the $\bcht{i}$ improves for the stops, though not to the extent of Fig.~\ref{fig:all_stops}.
In case (3) we see that the bounds are tighter than (1) and (2) immediately following the $\schedpolicy$ control point showing that $\schedpolicy$'s ability to regulate headway is somewhat enhanced by the $\dynamicpolicy$ at $\busstop{0}$, despite the marginally worse headway on the output of that stop.
This shows that policies can be combined along a route to shape headway bounds, but that they may inadvertently negatively impact bounds at other locations as well.

\begin{figure}
    \centering
    \includegraphics[width=1\linewidth,trim={1.5cm 0.25cm 3cm 2cm},clip]{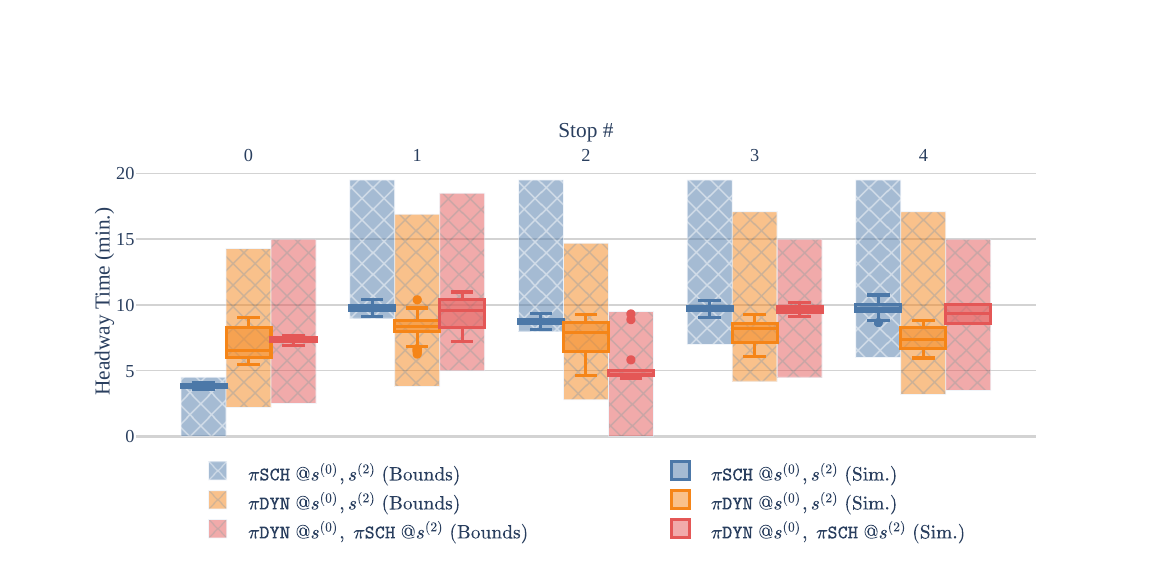}
    \caption{Simulated headway measurements (boxes) and headway bounds (hatched bars) for a $5$ stop route with holding polices applied at $\busstop{0}$ and $\busstop{2}$.  $\dynamicpolicy$ is applied with parameters $R=0.75$, $\Phi=30$.}
    \label{fig:2stops}
\end{figure}

\textbf{Hyperparameters of $\dynamicpolicy$} --
We now consider the route where $\dynamicpolicy$ is applied only at $\busstop{0}$ and adjust the maximum holding ($\Phi$) and the headway/tailway ratio at which the policy activates ($R$).
For transit operators, issuing holds of less than a minute can be difficult for drivers to execute (necessitating a lower $R$), and too much holding can delay passengers already onboard a vehicle (necessitating a bound on $\Phi$).
The results are shown in Fig.~\ref{fig:headway_hp}.
We observe that the best performance is when $R=1$, indicating that even minor deviations in headway will trigger holding, but that there is no difference in bounds with $R=0.75$, despite more variance in the simulated results.
Below $R=0.75$ the upper bound increases and the lower bound decreases indicating reduced performance, which is matched in the simulated results.
We also observe that decreasing $\Phi$ (reducing the amount of hold time the policy can issue), has little effect on the simulated performance, but causes the upper headway bound to increase.
As $\Phi\rightarrow0$, $\wcht{i}$ will approach the performance of $\nullpolicy$.

\begin{figure}
    \centering
    \includegraphics[width=1\linewidth,trim={1.5cm 0.25cm 3cm 2cm},clip]{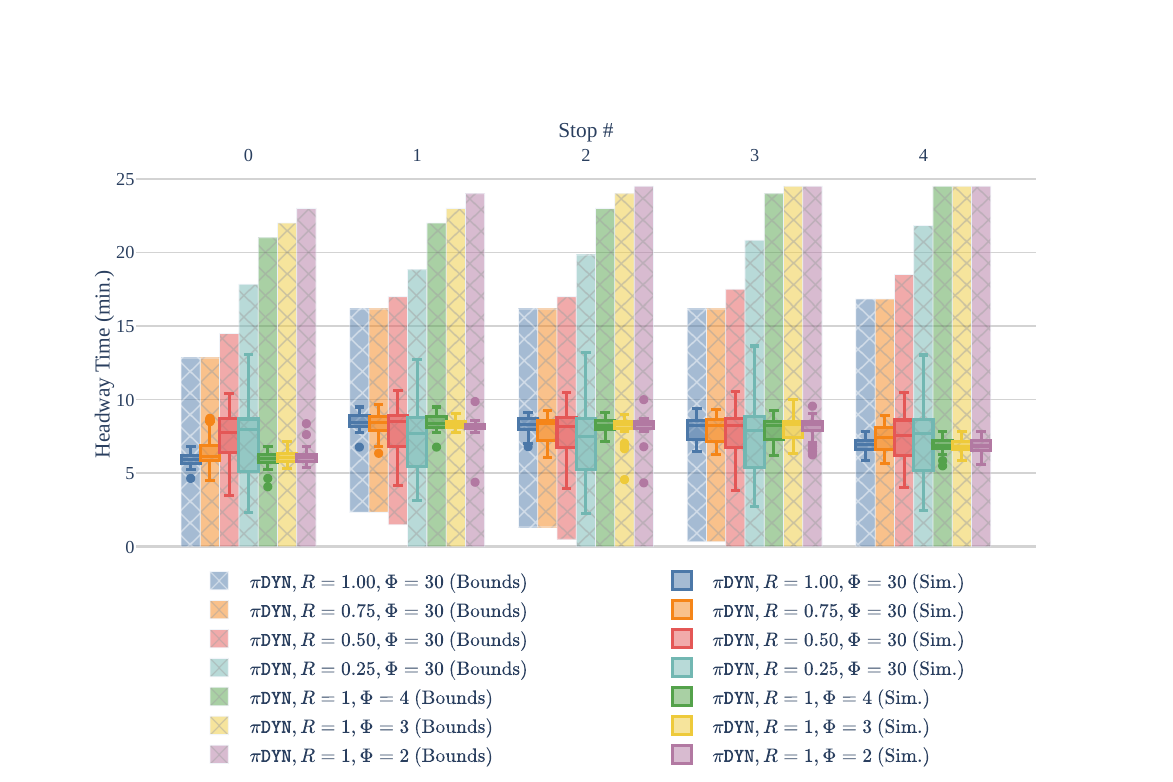}
    \caption{Simulated headway measurements (boxes) and headway bounds (hatched bars) for a $5$ stop route with $\dynamicpolicy$ applied at $\busstop{0}$ while sweeping parameters $R$ and $\Phi$.}
    \label{fig:headway_hp}
\end{figure}

\section{Case Study}
We now demonstrate the utility of our analytical bound when selecting holding policies for a real bus route in Nashville, TN.
Route 55 runs from Murfreesboro to downtown Nashville and consists of about 80 stops round-trip with 16 buses in service at a time, which leads to a high probability of bunching during peak commute times.
While we cannot arbitrarily test different holding policies for drivers and passengers, we have real world data for dwell times and travel times across all hours of the day.
These are fit to log-normal distributions as is common practice in transit system modeling \cite{rajbhandari_estimation_2003} of which we take the $5$th to $95$th percentile to generate our $\travelmin{i}{}$, $\travelmax{i}{}$, $\dwellmin{i}{}$, and $\dwellmax{i}{}$ for each stop.
In the existing route, all buses depart the Hickory Hollow terminal ($\busstop{0}$) and City Center terminal ($\busstop{40}$) according to a static schedule.
This is our baseline case in Fig.~\ref{fig:route55}, which shows the headway bounds, and case (A) in Fig.~\ref{fig:time_diff_comparison}, which shows the cumulative distributions of departure-to-departure times (a proxy for headway times) for all stops in a transit simulator internal to WeGo that comprehensively models passenger arrivals at stops, bus crowding, and vehicle overtakes.
We observe that in this case, buses can arrive behind schedule and the situation described in Note~\ref{note:schedpolicy} can occur leading to a headway upper bound of $120$ min.
The simulated results bear this out with a relatively flat distribution of waiting times with some as long as an hour.
In (B), we also include headway times for the ideal case, where we could apply $\dynamicpolicy$ at every stop with $R=1$ and no limit on $\Phi$.

Our goal is to replace the schedule with dynamic holding at key points along the route to regulate headway.
Due to physical constraints, this can only occur at Bell, Glengarry, and Thompson stations, whose positions along the route are marked in Fig.~\ref{fig:route55}.
In addition, we want to constrain total possible waiting time to $30$ min. to mitigate the probability of delayed releases from the terminals.
Thus, we consider two cases: (C) applying $\dynamicpolicy$ at Thompson station (inbound and outbound) with $\Phi=15$ min. in each direction, and (D) applying $\dynamicpolicy$ at all four potential control points but with $\Phi=7.5$ min.
(C) is operationally easier to implement, but we see that even with the limitation on $\Phi$, by spreading the holding over more control points (D) is able to outperform (C) in headway bounds along the middle of the route, which corresponds to the downtown core.
The simulated results bear a similar story with (D) marginally outperforming (C) in the long tail.
While these results show that our analysis can quickly identify headway bounds, it is important to remember that the specific result for this route -- that spreading holding over many control points outperforms two major control points -- may be different for different routes with different numbers of buses and scheduled departure times.
Our analysis technique is meant to facilitate exploration of such cases.

\begin{figure}
    \centering
    \includegraphics[width=1\linewidth,trim={1.25cm 0.75cm 2.25cm 2.25cm},clip]{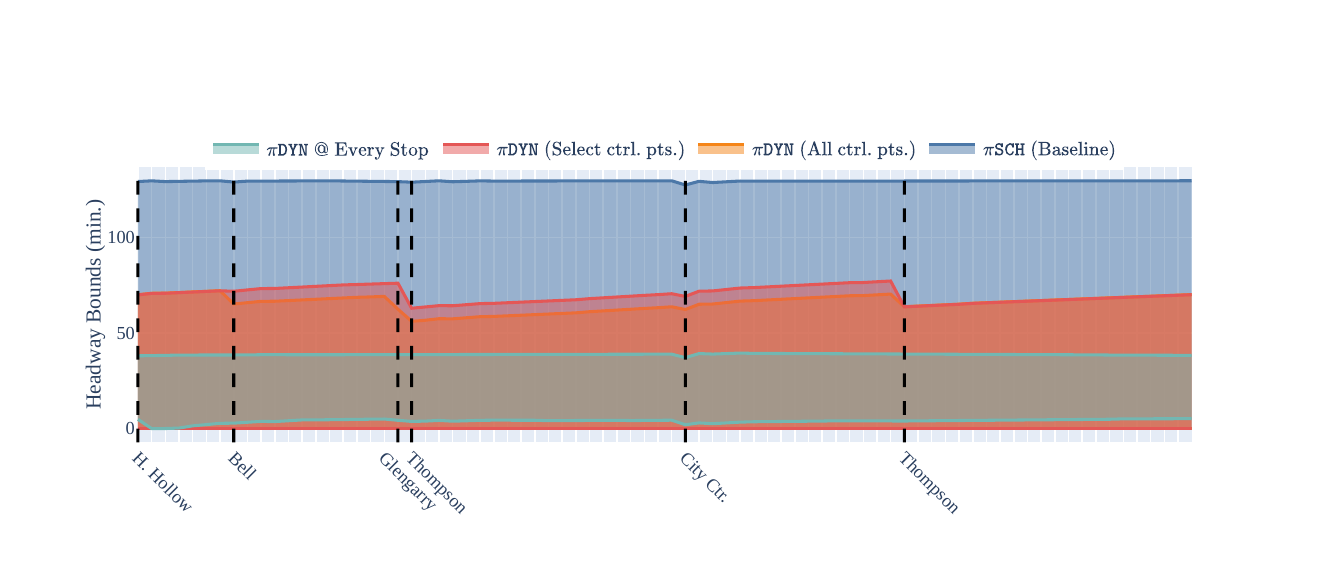}
    \caption{Headway bounds for WeGo route 55 in Nashville, TN under a schedule driven baseline (A), $\dynamicpolicy$ with $R=1$ and unbounded $\Phi$ applied at every stop (B), holding at 2 control points with $\Phi=15$ min. (C), and holding at 4 control points with $\Phi=7.5$ min. (D).}
    \label{fig:route55}
\end{figure}

\begin{figure}
    \centering
    \includegraphics[width=0.74\linewidth,clip,clip]{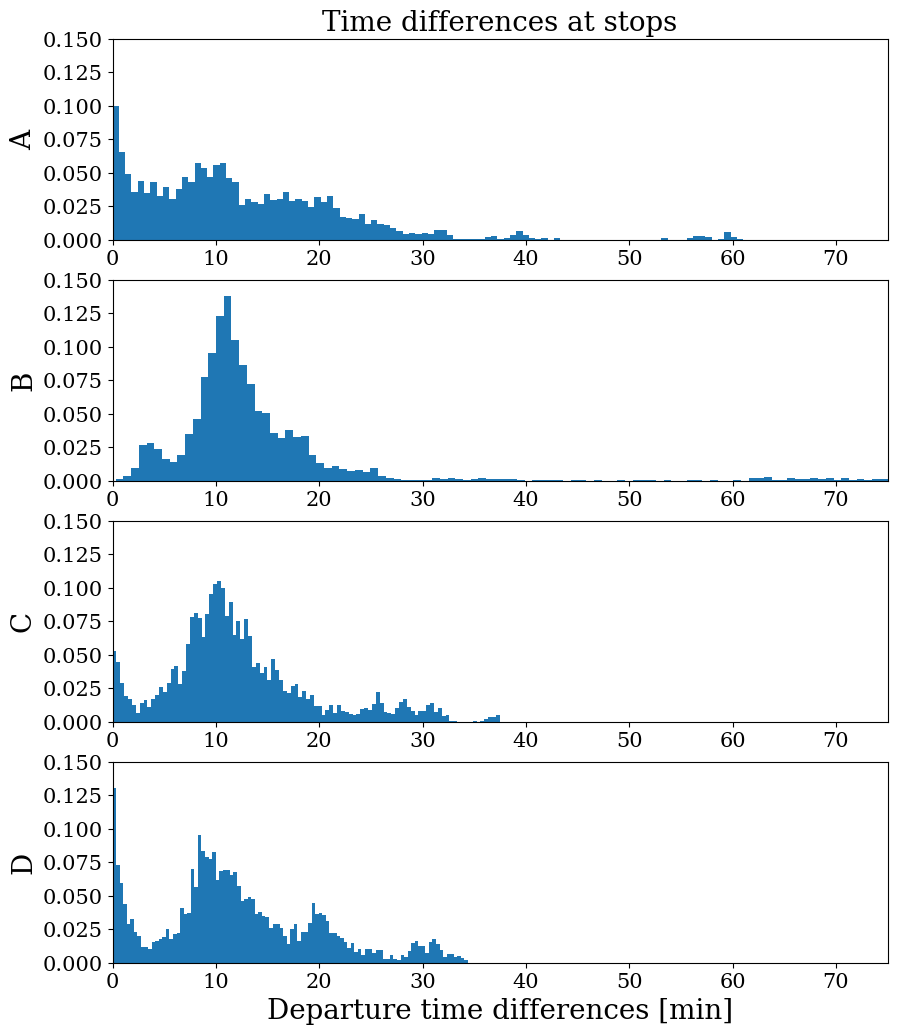}
    \caption{Simulated departure-to-departure times aggregated across all stops for the schedule-driven baseline (A), $\dynamicpolicy$ applied at every stop (B), holding at 2 control points with $\Phi=15$ min. (C), and holding at 4 control points with $\Phi=7.5$ min. (D).}
    \label{fig:time_diff_comparison}
\end{figure}

\section{Conclusion}
We have shown how to calculate upper and lower bounds on headway times under the assumption that travel and dwell times between stops in a transit route are bounded.
Through experiments on an example bus route, we have built some intuition about the affects of various policies and holding point locations on the route performance.
Finally, we have demonstrated the utility of this analysis by applying it to a real transit system and showing how the results can be applied in transit planning.
In the future, we hope to extend this work to include an upper bound on travel time on routes with arbitrary holding policies.
We would also like to develop a formal design strategy based on the results of this analysis that efficiently identifies optimal holding policies and stop locations for arbitrary routes.
We hope this work serves as a starting point for future worst-case analysis in transit planning.

\section*{Acknowledgment}
This work was supported in part by the U.S. Department of Transportation through the SMART Grants Program Phase~1 award in partnership with WeGo Public Transit and the Metropolitan Government of Nashville and Davidson County.  Any opinions, findings, conclusions, or recommendations expressed in this material are those of the authors and do not necessarily reflect the views of the U.S. Department of Transportation.

\bibliographystyle{IEEETran}
\bibliography{references}

\end{document}